\newtheorem{proposition}{Proposition}
\newtheorem*{proposition*}{Proposition}
\newtheorem{theorem}{Theorem}
\newtheorem*{theorem*}{Theorem}
\newtheorem*{corollary*}{Corollary}
\theoremstyle{definition}
\newtheorem*{remark}{Remark}
\def\Tr{\mathrm{Tr}}
\newcommand{\free}{\mathcal{F}}
\newcommand{\freeop}{\mathcal{O}}
\newcommand{\R}{R}
\let\succ\relax
\newcommand{\succ}{\mathrm{succ}}
\newcommand{\proj}[1]{\ket{#1}\!\bra{#1}}
\begin{document}
\title{Fisher information universally identifies quantum resources}
\author{Kok Chuan Tan}
\email{bbtankc@gmail.com}
\affiliation{ School of Physical and Mathematical Sciences, Nanyang Technological University, Singapore 637371, Republic of Singapore}
\author{Varun Narasimhachar}
\affiliation{ School of Physical and Mathematical Sciences, Nanyang Technological University, Singapore 637371, Republic of Singapore}
\author{Bartosz Regula}
\affiliation{ School of Physical and Mathematical Sciences, Nanyang Technological University, Singapore 637371, Republic of Singapore}

\begin{abstract}
We show that both the classical as well as the quantum definitions of the Fisher information faithfully identify resourceful quantum states in general quantum resource theories, in the sense that they can always distinguish between states with and without a given resource. This shows that all quantum resources confer an advantage in metrology, and establishes the Fisher information as a universal tool to probe the resourcefulness of quantum states. We provide bounds on the extent of this advantage, as well as a simple criterion to test whether different resources are useful for the estimation of unitarily encoded parameters. Finally, we extend the results to show that the Fisher information is also able to identify the dynamical resourcefulness of quantum operations.
\end{abstract}

\maketitle

\noindent{\it Introduction ---} The Fisher information (FI) plays an important foundational role in quantum information science. In quantum metrology and sensing, it determines the ultimate limits of precision of our measurement devices via the well known quantum Cram{\'e}r-Rao bound~\cite{Tan2019, Sidhu2020, Braunstein1994}. Existing applications include interferometry~\cite{Caves1981, Dowling2008, Schnabel2010}, magnetometry~\cite{Taylor2008, Bhattacharjee2020}, thermometry~\cite{Correa2015, DePasquale2016}, quantum illumination~\cite{Lloyd2008, Zhuang2017, Sanz2017}, displacement sensing~\cite{Yadin2018, Kwon2019}, among others. Crucially, such applications exploit the use of well-studied nonclassical quantum properties such as coherence~\cite{Baumgratz2014, Streltsov2017}, entanglement~\cite{Horodecki2009, Plenio2007}, and negative quasiprobabilities~\cite{Cahill1969, Tan2020} in order to demonstrate the intrinsic superiority of quantum measurement devices over classical ones. FI has also been used to study nonclassical features of quantum systems such as quantum coherence~\cite{Giorda2017, Tan2018-2} and entanglement~\cite{Pezze2009, Toth2012}.

Traditionally, different notions of nonclassicality in quantum mechanics have been studied independently. As such, the theoretical tools and quantities that were developed in the past typically probe a single nonclassical feature at a time. However, recent developments have made tremendous strides in providing a unified framework to study not only several disparate notions of nonclassicality~\cite{Tan2016, sperling_2015, Tan2018}, but also more general resources of quantum systems~\cite{horodecki_2012,Chitambar2019}.
This has led to the discovery of physical tasks and operational quantities that are relevant in not just one particular resource theory, but also in general settings. This motivates us to consider quantities that are universally applicable in the sense that they maintain a physically meaningful interpretation while being able to identify \textit{every} state which is considered ``nonclassical'' or ``resourceful'' within the physical constraints of the given resource theory~\cite{piani_2009,piani_2015,brandao_2015,takagi_2019-2,bae_2018,takagi_2019,uola_2019-1,uola_2020-1,ducuara_2020}. An example of such quantities would be the class of robustness measures~\cite{vidal_1999}, which are well-defined quantifiers of any quantum resource that find use in quantifying the operational advantages of resources in channel discrimination tasks~\cite{takagi_2019-2,takagi_2019,regula_2021}. Not all meaningful quantities can identify all resource states of interest --- for instance, in the theory of quantum entanglement, there exist important tasks such as distillation or quantum teleportation for which certain classes of entangled states are useless~\cite{horodecki_1998,Horodecki2009}, and so quantities based on such tasks fail to faithfully characterize entanglement as a resource. This raises the questions of which tasks and quantities can be considered as universal witnesses of general quantum resources, and in which settings different notions of quantumness can provide tangible practical advantages.

In this work, we show that the FI universally characterizes the resources of quantum states, regardless of the specific resource in consideration, in that it is able to identify every resourceful state in general quantum resource theories. An immediate implication of this result is that quantum resources are always useful for quantum metrology, since there always exists some metrological problem where a resourceful probe outperforms a probe which does not possess a given resource. This also implies that the FI can be used as a generic tool to probe the resources of any system. We then establish theoretical bounds on the advantages provided by a given quantum states by relating the FI to the robustness measure of the given resource. We additionally provide a simple criterion for testing whether a given resource is useful for unitary parameter estimation. Finally, we also show that the FI can identify the resources of quantum operations in general resource theories.\vspace{2ex}

\noindent{\it Preliminaries ---} We first define what a general quantum resource theory means in our context. Let $\mathcal{S}$ be the state space that describes a quantum system. A quantum resource theory is composed of a well-defined set of free quantum states $\mathcal{F}$ --- depending on the setting, these can be understood as the classical states, or the states without the given resource. Such states are accompanied by some set of free quantum operations $\mathcal{O}$. In order to account for as many possible formulations of quantum resources as possible, only a minimal set of assumptions are imposed on the sets $\mathcal{F}$ and $\mathcal{O}$. The set of free states $\mathcal{F}$ is assumed to be some closed and convex, but otherwise arbitrary subset of the state space $\mathcal{S}$. The physical interpretation of the assumption of convexity is that if one statistically mixes two free states together, the output will remain free. Given a well-defined set $\mathcal{F}$, we assume that $\mathcal{O}$ is some set of quantum maps that satisfies $\Phi(\sigma) \in \mathcal{F}$ if $\sigma \in \mathcal{F}$ and $\Phi\in\mathcal{O}$. These very minimal assumptions made on $\mathcal{F}$ and $\mathcal{O}$ maximizes the generality of the subsequent results.

We now define the FI. In a typical scenario, there are two types of FI considered in quantum information. The first type is the FI one obtains from the classical post\hyp processing of statistical data. This data is typically obtained from the measurement output of a fixed measurement setup. We will refer to this as the classical FI (CFI), and denote it as $F_C$.

The second type of FI is the maximum CFI one can obtain over all possible quantum measurements. This is typically called the quantum FI (QFI), and will here be denoted $F_Q$. By definition, we see that the CFI is a simple lower bound to the QFI.

Suppose we have a quantum channel $\Phi_\theta$ that depends on some real parameter $\theta$, and we would like to estimate $\theta$. In order to do this, we pass a state $\rho$, called a probe, through the quantum channel $\Phi_\theta$ and then perform a quantum measurement (positive operator--valued measure) $M = \{ M_i \}$, where $M_i$ are positive operators such that $\sum_i M_i = \openone$. This results in the measurement statistics $P(i\mid \theta) = \Tr[\Phi_\theta(\rho)M_i]$. The maximum information about $\theta$ that we can obtain from the statistics $P(i\mid \theta)$ is quantified by the CFI, which is given by~\cite{Kay1993, Lehmann1998}:
\begin{align}
F_C(\rho \mid \Phi_\theta, M)  \coloneqq \sum_i  P(i\mid \theta)  \left[\pdv{\log P(i\mid \theta)}{\theta}\right]^2 . \label{def::classFish}
\end{align}

In most cases, the quantum channel is fixed, so we can suppress the dependence on $\Phi_\theta$ and use the simplified notation $F_C(\rho \mid M)$ instead. A similar notation will also be employed for the QFI $F_Q$.

In general, the FI is to be evaluated with respect to some given value of $\theta$. Since the QFI $F_Q(\rho)$ is just the CFI maximized over all possible measurements $M$, the former depends only on the state $\rho$ and the quantum channel $\Phi_\theta$, and we have $F_C(\rho \mid M)  \leq  F_Q(\rho) $. Formally, the QFI is given by the expression
\begin{align}
F_Q(\rho) \coloneqq \Tr(\rho_\theta D^2_\theta),
\end{align} where $D_\theta$ is the symmetric logarithmic derivative~\cite{Helstrom1967, Helstrom1968}, satisfying the equation $\pdv{\theta} \rho_\theta = \acomm{\rho_\theta}{D_\theta}/2$.\vspace{2ex}

\noindent{\it Nonclassicality from the FI ---} We will now establish our main result, which is that the FI can reveal general quantum resources. In order to do this, we need to demonstrate that for any resourceful state $\rho \not{\in} \mathcal{F}$, there always exists a metrological problem represented by some quantum channel $\Phi_\theta$ together with some measurement $M$, where the resulting FI correctly identifies $\rho$ to possess some resources.

Suppose we would like to witness the resources of a state via the CFI. One way to go about doing this is to consider the quantity:
\begin{align} \label{def::NC}
N_C (\rho \mid M) \coloneqq F_C(\rho \mid M) - \max_{\sigma \in  \free}F_C(\sigma \mid M),
\end{align} 
where $F_C(\rho \mid M)$ is the CFI obtained by performing the measurement $M$ on the state $\Phi_\theta(\rho)$, and the maximization is over $ \free$, the set of free states in any given resource theory. $N_C$ then quantifies the minimum quantum advantage of a nonclassical state $\rho$ over all possible classical states for a given metrological problem.

We can also consider a similar definition using the QFI:

\begin{align} \label{def::NQ}
N_Q (\rho) \coloneqq F_Q(\rho ) - \max_{\sigma \in  \free} F_Q(\sigma).
\end{align} 

We see that if $N_Q(\rho)>0$ or $N_C(\rho \mid M) > 0$, then the FI that is obtained using $\rho$ exceeds that which can be obtained using any resourceless state $\sigma \in  \free$. Since the excess FI can only be attributed to the given resource, the state $\rho$ must be resourceful.

Upon first inspection, one may expect, since the QFI contains more information about the metrological utility of the quantum state than the CFI, that $N_Q$ performs better than $N_C$ at identifying nonclassicality. This is in fact incorrect. To see this, recall that the QFI is the CFI optimized over all possible quantum measurements. Suppose $M^\star$ is the optimal measurement. This implies that for any channel $\Phi_\theta$, it is always possible to find a measurement $M^\star$ such that $N_C (\rho \mid M^\star) \geq N_Q(\rho)$. In general, we therefore see that the gap between resourceful and free states is larger using the CFI compared to the QFI, i.e., the CFI is able to identify more states as resourceful. Indeed, there exist scenarios where $N_Q$ witnesses strictly fewer states than $N_C$. This is further discussed in the Supplemental Material~\cite{CompanionPaper}.

In the following Theorem, we show that both the classical and the quantum versions of the FI can be used to identify general quantum resources.

\begin{theorem} \label{thm::existence}
	There exists a parameter estimation problem with quantum channel $\Phi_\theta$ and measurement $M$ such that  $N_C(\rho \mid M) > 0$ and $N_Q(\rho) > 0$ if and only if $\rho \notin \free$.
\end{theorem}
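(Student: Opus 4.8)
The plan is to treat the two directions separately, with the ``only if'' direction being essentially immediate and the ``if'' direction carrying all the work. For the easy direction, suppose $\rho \in \free$. Then for \emph{any} channel $\Phi_\theta$ and measurement $M$, the state $\rho$ itself lies in the feasible set of the maximizations defining $N_C$ and $N_Q$, so $F_C(\rho \mid M) \le \max_{\sigma \in \free} F_C(\sigma\mid M)$ and $F_Q(\rho) \le \max_{\sigma\in\free}F_Q(\sigma)$; hence $N_C(\rho\mid M)\le 0$ and $N_Q(\rho)\le 0$ identically. Contrapositively, the existence of a single problem with $N_C(\rho\mid M)>0$ (or $N_Q(\rho)>0$) forces $\rho\notin\free$.

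For the forward direction I would first reduce the problem to the QFI alone, following the observation made just above the theorem. If I can exhibit a channel with $N_Q(\rho)>0$, then choosing $M=M^\star$ to be the QFI-optimal measurement for $\rho$ on that same channel gives $F_C(\rho\mid M^\star)=F_Q(\rho)$ while $\max_{\sigma}F_C(\sigma\mid M^\star)\le\max_\sigma F_Q(\sigma)$, so that $N_C(\rho\mid M^\star)\ge N_Q(\rho)>0$; both inequalities then hold for one and the same problem. It therefore suffices to construct, for a given $\rho\notin\free$, a channel $\Phi_\theta$ with $F_Q(\rho)>\max_{\sigma\in\free}F_Q(\sigma)$.

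The construction I would use exploits the convexity and closedness of $\free$. By the hyperplane separation theorem there is a Hermitian witness $W$ with $\Tr(W\rho)>\max_{\sigma\in\free}\Tr(W\sigma)$, and after an affine rescaling $E=aW+b\openone$ with $a>0$ I may assume $0\le E\le\openone$ while preserving the strict gap $\Tr(E\rho)>\max_{\sigma\in\free}\Tr(E\sigma)$. I then take $\Phi_\theta$ to be the measure-and-prepare channel that measures $\{E,\openone-E\}$ and, conditioned on the outcome, prepares two \emph{orthogonally supported} families: on the ``$E$'' branch a fixed rotating pure qubit $\proj{\psi_\theta}$ of constant QFI $f>0$, and on the complementary branch a $\theta$-independent state. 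Because the branch weight $p=\Tr(E\rho)$ does not depend on $\theta$, the output is block diagonal with $\theta$-independent weights and its QFI collapses to $F_Q(\Phi_\theta(\rho))=p\,F_Q(\proj{\psi_\theta})=f\,\Tr(E\rho)$, which is now \emph{linear} in $\rho$. Consequently $N_Q(\rho)=f\big(\Tr(E\rho)-\max_{\sigma\in\free}\Tr(E\sigma)\big)>0$, as desired, and the reduction of the previous paragraph then yields $N_C(\rho\mid M^\star)>0$ for the same channel.

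The main obstacle is conceptual rather than computational: the crux is recognizing that a measure-and-prepare channel linearizes the otherwise nonlinear dependence of the QFI on the probe state, so that the abstract separating witness guaranteed by Hahn--Banach converts directly into a metrological gap. The remaining points to verify are routine: that the QFI of a block-diagonal state with $\theta$-independent weights decomposes as the weighted sum of the block QFIs (which follows by solving the defining equation $\pdv{\theta}\rho_\theta=\acomm{\rho_\theta}{D_\theta}/2$ blockwise), that the affine map $W\mapsto E$ preserves strict separation, and that closedness of $\free$ together with compactness of the state space guarantees the maximum is attained with the strict inequality intact.
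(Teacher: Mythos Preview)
Your argument is correct and takes a genuinely different route from the paper's. The paper first proves $N_C>0$ by invoking the channel-discrimination result of Takagi--Regula to obtain a success probability $p_{\mathrm{succ}}$ that strictly separates $\rho$ from $\free$, then encodes $\theta$ as a \emph{mixing weight} between the input and a reference state $\sigma_0$; the CFI of the resulting two-outcome statistics, evaluated in the limit $\theta\to 0$, becomes a monotone function of $p_{\mathrm{succ}}$. The $N_Q$ statement is then deduced a posteriori from the diagonality of the output. You instead go in the opposite order: you prove $N_Q>0$ first by building a measure-and-prepare channel in which a $\theta$-\emph{independent} POVM $\{E,\openone-E\}$ sets the block weights and the $\theta$-dependence lives entirely in one prepared block, so that the QFI is exactly $f\,\Tr(E\rho)$ and hence linear in the probe for all $\theta$; $N_C>0$ then follows from the optimal-measurement reduction already noted before the theorem. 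Your approach is more elementary (direct Hahn--Banach separation rather than citing the discrimination result, no $\theta\to 0$ limit to control) and makes the linearization mechanism transparent. The paper's construction, on the other hand, is tailored so that the same $p_{\mathrm{succ}}$ machinery feeds directly into the quantitative robustness bounds of Theorem~\ref{thm::NCbounds}; your witness-based channel would need additional work to yield those explicit $R(\rho)^2$ estimates.
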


\begin{proof}
	We will first prove the statement for $N_C$.
	
	It is immediately clear that if $N_C(\rho \mid M) > 0$ or $N_Q(\rho ) > 0$ for any parameter estimation problem, then $\rho$ must be resourceful, which proves the ``only if" direction.
	
	To prove the converse direction, we use a result from Ref.~\cite{takagi_2019-2}, which states that if $\rho$ is resourceful, then there exists a pair of quantum channels $\{A_0, A_1 \}$ and POVM $\{ \pi_0, \pi_1 \}$ such that $p_{\text{succ}}(\rho)>\max_{\sigma \in  \free}p_{\text{succ}}(\sigma)$, where $p_{\text{succ}}(\rho)  \coloneqq \frac{1}{2}\Tr[A_0(\rho)\pi_0]+\frac{1}{2}\Tr[A_1(\rho)\pi_1]$.
	
	Let $\{A_0, A_1 \}$ and POVMs $\{ \pi_0, \pi_1 \}$ be any such channel satisfying the above condition. We also introduce some state $\sigma_0 \neq \rho$ which will be specified later.
	
	We now consider the following series of quantum channels acting on some arbitary state $\tau$:
	\begin{align}
	\Lambda_1(\tau) &= \tau \otimes \frac{1}{2} \openone \otimes [\theta \ketbra{0}+(1-\theta) \ketbra{1}] 
	\end{align}
	
	\begin{align}
	\begin{split}\Lambda_2 \circ \Lambda_1(\tau) &= \tau\otimes \frac{1}{2} \openone \otimes \theta \ketbra{0} \\ & + \sigma_0 \otimes \frac{1}{2} \openone \otimes (1-\theta) \ketbra{1}\end{split} 
	\end{align}
	
	\begin{align}
	\begin{split}\Lambda_3 \circ \Lambda_2 \circ \Lambda_1(\tau) &= A_0(\tau) \otimes \frac{1}{2} \ketbra{0} \otimes \theta \ketbra{0} \\ 
	&+ A_1(\tau) \otimes \frac{1}{2} \ketbra{1} \otimes \theta \ketbra{0} \\
	&+ A_0(\sigma_0) \otimes \frac{1}{2} \ketbra{0} \otimes (1-\theta) \ketbra{1} \\ 
	&+ A_1(\sigma_0) \otimes \frac{1}{2} \ketbra{1} \otimes (1-\theta) \ketbra{1} \end{split} 
	\end{align}
	
	\begin{align}
	\begin{split}\Lambda_4 \circ \Lambda_3 &\circ \Lambda_2 \circ \Lambda_1(\tau) = \\
	&\{\Tr[\Lambda_3 \circ \Lambda_2 \circ \Lambda_1(\tau) \pi_0 \otimes \ketbra{0} \otimes \openone] \\&+ \Tr[\Lambda_3 \circ \Lambda_2 \circ \Lambda_1(\tau) \pi_1 \otimes \ketbra{1} \otimes \openone] \}\ketbra{0} \\
	&\{ \Tr[\Lambda_3 \circ \Lambda_2 \circ \Lambda_1(\tau) \pi_0 \otimes \ketbra{1} \otimes \openone] \\&+ \Tr[\Lambda_3 \circ \Lambda_2 \circ \Lambda_1(\tau) \pi_1 \otimes \ketbra{0} \otimes \openone] \}\ketbra{1} \end{split}	
	\end{align}
	
	Finally, we perform the projection $P_0 \coloneqq \ketbra{0}$ and $P_1 \coloneqq \ketbra{1}$ to obtain the statistics 
	\begin{align}
	P(0 \mid \theta) = \theta p_{\text{succ}}(\tau) + (1-\theta) p_{\text{succ}}(\sigma_0)
	\end{align} and $P(1 \mid \theta) = 1- P(0 \mid \theta)$.
	
	Since $P(i \mid \theta)$, $i\in\{0,1\}$ is obtained from a series of quantum maps followed by a projection on $\tau$, we see that this fits into the basic form $P(i \mid \theta) = \Tr[\Phi_\theta(\tau)M_i]$. Suppose we would like to estimate the parameter $\theta$. We can then use Eq.~\ref{def::classFish} to evaluate the classical information of the statistics.
	
	One may verify that 
	\begin{align}
	F_C(\tau \mid M) = \frac{\left(\pdv{P(0 \mid \theta)}{\theta}\right)^2}{P(0 \mid \theta)(1-P(0 \mid \theta))}. \label{eq::Thm1FC}
	\end{align}
	
	Evaluating near the vicinity of $\theta = 0$, the resulting FI is 
	\begin{align}
	F_C(\tau \mid M) = \frac{[p_{\text{succ}}(\tau)- p_{\text{succ}}(\sigma_0)]^2}{p_{\text{succ}}(\sigma_0)(1-p_{\text{succ}}(\sigma_0))}.
	\end{align}
	
	Recall that so far, the state $\sigma_0$ is not yet specified. We now choose it such that it satisfies $p_{\text{succ}}(\sigma_0) = \min_{\sigma \in  \free} p_{\text{succ}}(\sigma)$. This means that the numerator is a monotonically increasing function of $p_{\text{succ}}(\tau)$. We also see that the denominator does not depend on the state $\tau$. Together with the fact that  $p_{\text{succ}}(\rho) > p_{\text{succ}}(\sigma)$ for every $\sigma \in  \free$, we must have $F_C(\rho \mid  M) > \max_{\sigma \in  \free} F_C(\sigma \mid M)$
	and $N_C(\rho \mid M) > 0 $ if $\rho$ is nonclassical. This shows the existence of at least one parameter estimation problem where $N_C(\rho \mid M) > 0 $ if $\rho$ is nonclassical. For the special case where $p_{\text{succ}}(\sigma_0) = 0$, then Eq.~\ref{eq::Thm1FC} becomes $F_C(\tau \mid M) = p_{\text{succ}}(\tau)/\theta$ instead and a similar conclusion is reached. This is sufficient to prove both directions of the statement for $N_C$.
	
	The equivalent statement for $N_Q$ comes from the observation that the  quantum map $\Lambda_4 \circ \Lambda_3 \circ \Lambda_2 \circ \Lambda_1$ maps any input state to a diagonal state, and that the measurement $M$ is also diagonal. We then use the fact that for diagonal states, the QFI is saturated by a measurement in the diagonal basis~\cite{Tan2019}. This is sufficient to show that $F_C(\rho \mid  M) - \max_{\sigma \in  \free} F_C(\sigma \mid M) = F_Q(\rho ) - \max_{\sigma \in  \free} F_Q(\sigma ) = N_Q(\rho >0)$, which proves the required statement.
	\end{proof}

Theorem~\ref{thm::existence} establishes that $N_C$ and $N_Q$ are both able to identify any resourceful state in general quantum resource theories. Both of the quantities therefore constitute faithful resource witnesses of direct physical relevance. This result also demonstrates the existence of a metrological advantage for any nonclassical state.

However, one can also be interested in understanding this advantage precisely: \textit{how much} advantage can be extracted from a given state $\rho$ over all resourceless states?
To this end, an even stronger statement can be proven which quantitatively relates the quantum advantage $N_C$ and an important resource quantifier --- the (generalized) robustness measure~\cite{vidal_1999}, which we denote $R(\rho)$.

So far, we have considered $N_C$ given some fixed parameter estimation problem with encoding $\Phi_\theta$ and measurement $M$.
As a measure of the extent of the quantum advantage, it is reasonable to consider the maximum nonclassical advantage one may obtain over all such encodings and measurements. Since the Fisher information can be scaled via a simple reparametrization $\theta \rightarrow k\theta$, we are also motivated to normalize the type of encoding channels over the set of free states. In light of these considerations, we can define the following  quantity:
\begin{align}
N^{\max}_C(\rho) \coloneqq \max_{\Phi_\theta \in \mathcal{P}, M}  N_C(\rho \mid \Phi_\theta , M ),
\end{align} where  $\mathcal{P}$ is the set of all parameter estimation problems satisfying $\max_{\sigma \in  \free} F_C(\sigma \mid \Phi_\theta, M ) \leq 1$. It can be shown that $N^{\max}_C$ is a resource monotone, which we elaborate further upon in the Supplemental Material~\cite{CompanionPaper}.

\begin{theorem} \label{thm::NCbounds}
	In any resource theory, there exists a parameter estimation task $\Phi_\theta$ such that 
	\begin{equation}\begin{aligned}
	\R(\rho)^2 \leq N_C(\rho \mid \Phi_\theta, M) \leq \R(\rho)^2 + 2 \R(\rho),
	\end{aligned}\end{equation}
	
	In particular, $N_C(\rho \mid \Phi_\theta , M) > 0$ iff $\rho \notin \free$, and $N_C^{\max}(\rho) \geq \R(\rho)^2$ is a computable lower bound for any resource.
\end{theorem}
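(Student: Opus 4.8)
The plan is to reduce the problem to the binary-statistics Fisher information of Theorem~\ref{thm::existence} and then to exploit a difference-of-squares identity for $N_C$ whose ``gap'' can be tuned to equal $\R(\rho)$ exactly. First I would invoke the operational characterization of the generalized robustness from Ref.~\cite{takagi_2019-2}: for any $\rho\notin\free$ there is a (sub)channel discrimination task whose success functional is affine, $p_{\mathrm{succ}}(\tau)=\Tr[\tau G]$ with $0\le G\le\openone$, and which is \emph{tight} for the robustness, i.e. $\Tr[\rho G]=(1+\R(\rho))\,t$ where $t\coloneqq\max_{\sigma\in\free}\Tr[\sigma G]$. Rescaling $G\mapsto\mu G$ with $\mu\in(0,1]$ leaves this ratio unchanged, so I may assume $t\le 2/5$. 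I then reuse the encoding of Theorem~\ref{thm::existence}, but produce the reference outcome by a fixed, state-independent process rather than by routing a reference state through the discrimination, so that the statistics become $P(0\mid\theta)=\theta\,\Tr[\tau G]+(1-\theta)q_0$ with $q_0\in[0,1]$ now a \emph{freely chosen} design parameter. Evaluating Eq.~\eqref{eq::Thm1FC} at $\theta=0$ gives $F_C(\tau\mid M)=(\Tr[\tau G]-q_0)^2/[q_0(1-q_0)]$, whose denominator is identical for every input state.

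The heart of the argument is the observation that, writing $g_\tau\coloneqq\Tr[\tau G]$, one has $\sqrt{F_C(\tau\mid M)}=|g_\tau-q_0|/\sqrt{q_0(1-q_0)}$, so that, with $\mathcal E\coloneqq\max_{\sigma\in\free}F_C(\sigma\mid M)$ and $\beta\coloneqq\sqrt{F_C(\rho\mid M)}-\sqrt{\mathcal E}$,
\[
N_C(\rho\mid M)=F_C(\rho\mid M)-\mathcal E=\beta\bigl(\beta+2\sqrt{\mathcal E}\bigr).
\]
Since $0\le\sqrt{\mathcal E}\le 1$ whenever the task lies in $\mathcal P$, this immediately yields $\beta^2\le N_C\le\beta^2+2\beta$, which is \emph{precisely} the claimed window once I arrange $\beta=\R(\rho)$. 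To do so I pick the operating point so that $q_0(1-q_0)=t^2$, i.e. $q_0=\tfrac12\bigl(1-\sqrt{1-4t^2}\bigr)$; the condition $t\le2/5$ guarantees $q_0\le t/2\le\tfrac12(t+\min_{\sigma\in\free}g_\sigma)$, which in turn forces $\max_{\sigma\in\free}(g_\sigma-q_0)^2=(t-q_0)^2$ and hence $\sqrt{\mathcal E}=(t-q_0)/t$. A one-line computation then gives $\beta=(g_\rho-t)/t=\R(\rho)$ together with $\mathcal E=(1-q_0/t)^2<1$, so the task is feasible in $\mathcal P$ and $\R(\rho)^2\le N_C\le\R(\rho)^2+2\R(\rho)$. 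The remaining assertions follow at once: $N_C>0\iff\beta=\R(\rho)>0\iff\rho\notin\free$; feasibility in $\mathcal P$ gives $N_C^{\max}(\rho)\ge N_C\ge\R(\rho)^2$; and $\R(\rho)$ is the value of a semidefinite program, hence computable.

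The step I expect to be the crux is not any single calculation but the choice of reparametrization encoded in $q_0$. A ``natural'' reference, such as taking the optimal free state as baseline ($q_0=t$), reproduces the ratio $1+\R$ inside $F_C(\rho)$ but yields only a one-sided estimate that can dip below $\R(\rho)^2$ for sufficiently resourceful states, because the Fisher-information denominator $q_0(1-q_0)$ couples the advantage to the operating point. Recognizing that $N_C$ is a genuine difference of squares, and that the freedom in $q_0$ lets me slide the operating point onto the locus $q_0(1-q_0)=t^2$ where $\beta$ collapses exactly onto $\R(\rho)$, is what makes both inequalities hold simultaneously for a \emph{single} task. The only subsidiary technical points are the harmless rescaling $t\le2/5$ and the check that at this operating point the free-state maximum of $F_C$ is attained at $g_F^{\max}=t$ rather than at $\min_{\sigma}g_\sigma$; both are settled by the elementary estimate $q_0\le t/2$.
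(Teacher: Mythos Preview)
Your argument is correct and follows essentially the same route as the paper's: both invoke the Takagi--Regula channel-discrimination task that is tight for the robustness, embed it in the binary-outcome Fisher construction of Theorem~\ref{thm::existence}, and arrange the normalization so that $\sqrt{F_C(\rho\mid M)}-\sqrt{\max_{\sigma\in\free}F_C(\sigma\mid M)}=\R(\rho)$, after which the difference-of-squares identity gives $N_C=\R(\rho)\bigl[\R(\rho)+2c\bigr]$ with $c\in[0,1]$. The only substantive difference is in how the normalization is carried out: the paper absorbs the Fisher denominator by rescaling $\theta$ (multiplying by the constant $\sqrt{\Tr(W\sigma_0)(\Tr W-\Tr(W\sigma_0))}$, with $\sigma_0$ the free state of \emph{minimal} witness value) so that $F_C(\tau)=[\Tr(W\tau)-\Tr(W\sigma_0)]^2$ outright and $c=1-\Tr(W^\star\sigma_0)$ is fixed by the resource theory, whereas you keep the denominator and instead tune the reference probability $q_0$ onto the locus $q_0(1-q_0)=t^2$, yielding $c=1-q_0/t$; the paper's rescaling is a bit cleaner (no need for the auxiliary bound $t\le2/5$ or the endpoint check on $\max_\sigma(g_\sigma-q_0)^2$), while your version makes more explicit that the landing point inside the window is a design choice. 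One minor omission: you implicitly assume $t>0$ and $\R(\rho)<\infty$, whereas the paper treats the degenerate case $\R(\rho)=\infty$ separately.
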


Theorem~\ref{thm::NCbounds} provides a computable lower bound on the quantum advantage that can be extracted. 
We stress that the robustness $R(\rho)$ can always be computed as a convex optimization problem. In many cases, such as the resource theories of coherence~\cite{piani_2016}, multi-level coherence~\cite{ringbauer_2018}, and magic~\cite{veitch_2012,howard_2017,wang_2020,seddon_2020}, it becomes an efficiently computable semidefinite program, while in many theories including entanglement~\cite{vidal_1999,steiner_2003,harrow_2003} and multi-level entanglement~\cite{johnston_2018} it can be computed analytically for all pure states. Furthermore, in the class of affine resource theories~\cite{gour_2017,regula_2019}, which includes theories such as coherence and imaginarity~\cite{hickey_2018,wu_2021}, the lower bound of Thm.~\ref{thm::NCbounds} is tight, in the sense that there always exists a task such that $N_C(\rho \mid \Phi_\theta, M) = \R(\rho)^2$.

Taking this quantitative relationship further, it is natural to ask whether there is also an upper bound on the quantum advantage that a resource state affords in estimation tasks. Indeed, this is possible in the case where the decoding measurement has a binary outcome:
\begin{theorem}\label{thm::ubound}
    For any parameter estimation task with encoding channel family $\Phi_\theta$ and two\hyp outcome measurement $M\equiv(P,\openone-P)$, let $r\coloneqq F_C(\rho \mid  M)<\infty$ and
    \begin{equation}
        \omega:=\left|\pdv{\Tr{P\Phi_\theta(\rho)}}{\theta}\right|_{\theta=0}.
    \end{equation}
    Then,
\begin{align}\label{ubound}
    N_C(\rho \mid M)&\le r-\frac{\left[R_S(\rho)+1\right]^{-2}\omega^2}{\max\limits_{\tau\in\free}\Tr{\left[P\,\Phi_0(\tau)\right]}\left(1-\Tr{\left[P\,\Phi_0(\tau)\right]}\right)}\nonumber\\
    &\le r-\frac{4\omega^2}{\left[R_S(\rho)+1\right]^2},
\end{align}
where $R_S(\rho)$ is the standard robustness of $\rho$ with respect to $\free$~\cite{vidal_1999}.
\end{theorem}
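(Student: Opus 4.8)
The plan is to turn the claimed upper bound into a lower bound on the best free-probe Fisher information and then to construct an explicit near-optimal free probe from the standard-robustness decomposition of $\rho$. Since $N_C(\rho\mid M)=r-\max_{\sigma\in\free}F_C(\sigma\mid M)$, the first inequality is equivalent to the lower bound $\max_{\sigma\in\free}F_C(\sigma\mid M)\ge [R_S(\rho)+1]^{-2}\omega^2/D$, where I abbreviate $D:=\max_{\tau\in\free}\Tr[P\Phi_0(\tau)]\,(1-\Tr[P\Phi_0(\tau)])$. For the two\hyp outcome measurement $M=(P,\openone-P)$, Eq.~\ref{eq::Thm1FC} specialises to $F_C(\mu\mid M)=\ell_1(\mu)^2/[\ell_0(\mu)(1-\ell_0(\mu))]$ for any probe $\mu$, where $\ell_0(\mu):=\Tr[P\Phi_0(\mu)]$ and $\ell_1(\mu):=\partial_\theta\Tr[P\Phi_\theta(\mu)]\big|_{\theta=0}$ are both affine in $\mu$; by definition $\omega=|\ell_1(\rho)|$, and the denominator of any free probe is at most $D$.

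Next I would invoke the defining decomposition of the standard robustness: there exist $\sigma,\tau\in\free$ with $\rho=[R_S(\rho)+1]\,\sigma-R_S(\rho)\,\tau$. Affinity of $\ell_1$ then gives $\omega=\big|[R_S(\rho)+1]\,\ell_1(\sigma)-R_S(\rho)\,\ell_1(\tau)\big|$. The key step is to exhibit a free probe whose derivative obeys $|\ell_1|\ge\omega/[R_S(\rho)+1]$; feeding this into the binary formula and bounding the free-probe denominator by $D$ immediately yields $\max_{\sigma\in\free}F_C(\sigma\mid M)\ge[R_S(\rho)+1]^{-2}\omega^2/D$, i.e.\ the first inequality. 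To get the sharp factor I would not simply apply the triangle inequality to the displayed expression for $\omega$ (which loses a constant), but instead use the extremal structure of the \emph{optimal} standard-robustness decomposition: the minimising $\sigma,\tau$ are supported on opposite extreme points of $\free$ along the direction selected by the linear functional $\ell_1$, which fixes the relative signs of $\ell_1(\sigma)$ and $\ell_1(\tau)$ and forces one of them to have magnitude at least $\omega/[R_S(\rho)+1]$.

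The second inequality is then routine: since $x(1-x)\le 1/4$ on $[0,1]$ we have $D\le 1/4$, hence $[R_S(\rho)+1]^{-2}\omega^2/D\ge 4\omega^2/[R_S(\rho)+1]^2$, and subtracting from $r$ gives the stated weakening. I expect the main obstacle to be precisely the middle step --- controlling $\ell_1$ of a free state by $\omega$ with the correct robustness factor --- since the crude bound degrades the constant; making it sharp requires the complementary\hyp slackness/witness characterisation of the optimal decomposition (equivalently, its extremal geometry) rather than the decomposition alone. The binary\hyp outcome restriction is what makes this tractable, as it reduces the Fisher information to the single ratio $\ell_1^2/[\ell_0(1-\ell_0)]$ with an explicitly affine numerator and denominator.
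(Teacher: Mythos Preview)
Your overall architecture is the same as the paper's: rewrite $N_C(\rho\mid M)=r-\max_{\sigma\in\free}F_C(\sigma\mid M)$, use the binary formula $F_C(\mu\mid M)=\ell_1(\mu)^2/[\ell_0(\mu)(1-\ell_0(\mu))]$, upper\hyp bound the free denominator by $D$ (or by $1/4$ for the second line), and reduce everything to the single inequality
\[
\max_{\sigma\in\free}\big|\ell_1(\sigma)\big|\;\ge\;\frac{\omega}{R_S(\rho)+1}.
\]
The gap is precisely in your justification of this last inequality. You correctly observe that writing $\rho=[R_S(\rho)+1]\sigma-R_S(\rho)\tau$ with $\sigma,\tau\in\free$ and applying the triangle inequality only gives the constant $1+2R_S(\rho)$. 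Your proposed remedy, however, does not work: the optimal standard\hyp robustness decomposition is fixed by $\rho$ and the geometry of $\free$ alone, whereas the functional $\ell_1$ is determined by the (arbitrary) encoding $\Phi_\theta$ and effect $P$. There is no mechanism that forces the optimal $\sigma,\tau$ to sit at ``opposite extreme points of $\free$ along the direction selected by $\ell_1$''; different choices of $\Phi_\theta,P$ rotate $\ell_1$ freely while $\sigma,\tau$ stay put. Complementary slackness for $R_S$ ties the optimal $\sigma,\tau$ to the optimal robustness \emph{witness}, which is again independent of $\ell_1$, so that route does not supply the missing alignment either.

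The paper closes exactly this gap by a different device: it does not try to squeeze the sharp constant out of the decomposition. Instead it \emph{reinterprets} $|\ell_1(\tau)|$ operationally. For small $\delta>0$, the quantity $\delta\,|\dot p_\tau(0)|$ is (up to a fixed proportionality) the Helstrom gain $p_{\mathrm{gain}}(\tau)$ in discriminating the two channels $M\circ\Phi_0$ and $M\circ\Phi_\delta$ with probe $\tau$. One can then invoke the known channel\hyp discrimination bound (Thm.~7 of Takagi--Regula) $p_{\mathrm{gain}}(\rho)\le[R_S(\rho)+1]\max_{\sigma\in\free}p_{\mathrm{gain}}(\sigma)$, which immediately yields $\max_{\sigma\in\free}\dot p_\sigma^2\ge\omega^2/[R_S(\rho)+1]^2$. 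Combining with the denominator bound gives both displayed inequalities. So the missing idea in your plan is this operational reinterpretation of the derivative as a discrimination gain, which imports the correct $R_S+1$ factor from an external result rather than attempting to derive it directly from the primal decomposition.
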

$R_S$ is another operationally significant resource measure closely related to $R$, and like the latter, admits efficient SDP formulations in many resource theories. Thus, given any particular parameter estimation task with a two\hyp outcome measurement, we have an efficiently computable upper bound to the quantum advantage $N_C$ of any given resource state. The bound in the first line is less computationally feasible but tighter.

We note here that the parameter $\omega$ scales with the energy cost of applying the given family of encoding channels on $\rho$, with finer $\theta$\hyp resolution costing more energy. We expand on this connection in~\cite{CompanionPaper}, together with a proof of Theorem~\ref{thm::ubound}; there we also discuss why we suspect the upper bound \eqref{ubound} cannot be made independent of the estimation task in general. Finding upper bounds for the case of more general measurements is left for future work.
\vspace{2ex}

\noindent{\it Quantum resources for unitary encodings ---} We now consider the important special case where the quantum channel $\Phi_\theta$  is a unitary encoding channel $\Phi_\theta(\rho) = U_\theta \rho U_\theta^\dagger$ and $U_\theta \coloneqq e^{-i\theta G}$. Here, $G$ is some Hermitian operator specifying the unitary evolution, and is called the generator of the unitary encoding.

Given any unitary encoding generated by the Hermitian operator $G$, one may be interested to know whether the parameter estimation problem corresponding to $G$ is benefited by having nonclassical states in a given quantum resource theory. The following result establishes a simple criterion for determining whether $G$ reveals nonclassicality.

\begin{theorem} \label{thm::GCriterion}
	Consider any Hermitian generator $G$ and convex set of free states $\free$. Let $s^*$ be an optimal solution to the convex optimization problem 	
	

\begin{equation*}
\begin{aligned}
& \underset{X \geq 0}{\text{\rm maximize}}
& & 2\Tr[(G_A \otimes \openone_B - \openone_A \otimes G_B )^2 X_{AB}] \\
& \text{\rm subject to}
& & \Tr_A X_{AB} = \Tr_B X_{AB} = \sigma \in \free.
\end{aligned}
\end{equation*}

Let $\lambda_{\max}, \lambda_{\min}$ denote the largest and smallest eigenvalues of the generator $G$ resepectively. If $(\lambda_{\max} - \lambda_{\min})^2 > s^*$, then $N_Q(\rho) > 0$ for some $\rho$.
\end{theorem}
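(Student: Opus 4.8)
The plan is to reduce the claim to a comparison between the largest QFI attainable by \emph{any} probe and an upper bound, furnished by $s^*$, on the largest QFI attainable by a \emph{free} probe. Writing $\delta G\coloneqq G\otimes\openone-\openone\otimes G$ for the generator difference appearing in the SDP, I would first argue that $\max_{\rho}F_Q(\rho)=(\lambda_{\max}-\lambda_{\min})^2$ and then that $\max_{\sigma\in\free}F_Q(\sigma)\le s^*$. Granting these, the hypothesis $(\lambda_{\max}-\lambda_{\min})^2>s^*$ immediately produces a probe with strictly positive $N_Q$: taking $\rho^\star$ to be the state achieving the global maximum, one has $N_Q(\rho^\star)=F_Q(\rho^\star)-\max_{\sigma\in\free}F_Q(\sigma)\ge(\lambda_{\max}-\lambda_{\min})^2-s^*>0$.

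For the global maximum I would invoke the standard fact that under the encoding $U_\theta=e^{-i\theta G}$ the QFI of a pure probe $\proj\psi$ equals $4\,\mathrm{Var}_\psi(G)$, while $F_Q\le4\,\mathrm{Var}(G)$ for mixed states. Since the variance of $G$ in any state is at most $\tfrac14(\lambda_{\max}-\lambda_{\min})^2$, with equality for the equal superposition $\ket{\psi^\star}=\tfrac{1}{\sqrt2}\bigl(\ket{\lambda_{\max}}+\ket{\lambda_{\min}}\bigr)$ of the extremal eigenvectors of $G$, this pins the maximum at $(\lambda_{\max}-\lambda_{\min})^2$ and identifies the optimal probe $\rho^\star=\proj{\psi^\star}$.

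The crux is the bound $\max_{\sigma\in\free}F_Q(\sigma)\le s^*$, where the symmetric structure of the SDP is exactly what is needed. For any free $\sigma$ I would test the product operator $X_{AB}=\sigma\otimes\sigma$ as a feasible point: both of its marginals equal $\sigma\in\free$, so it meets the SDP constraints, and expanding $\delta G^2=G^2\otimes\openone+\openone\otimes G^2-2\,G\otimes G$ gives the identity $2\Tr[\delta G^2\,(\sigma\otimes\sigma)]=4\,\mathrm{Var}_\sigma(G)$. Combining this with $F_Q(\sigma)\le4\,\mathrm{Var}_\sigma(G)$ yields $F_Q(\sigma)\le2\Tr[\delta G^2(\sigma\otimes\sigma)]\le s^*$, the last step holding because $\sigma\otimes\sigma$ is feasible and $s^*$ is the maximal value; maximizing over $\sigma\in\free$ gives the bound. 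Conceptually, $s^*$ is a semidefinite relaxation of $\max_{\sigma\in\free}F_Q(\sigma)$, since the QFI admits a characterization as a \emph{minimum} of $2\Tr[\delta G^2 X]$ over symmetric extensions $X$ of $\sigma$, and replacing that inner minimization by a joint maximization over $(\sigma,X)$ can only enlarge the value, which is what renders $s^*$ SDP-computable.

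The main obstacle is certifying precisely this last bound, i.e.\ that the SDP value dominates the free QFI. The simple route above—testing the single feasible point $\sigma\otimes\sigma$ together with the textbook inequality $F_Q\le4\,\mathrm{Var}(G)$—suffices, but it rests on recognizing that the product extension is admissible and that its objective reproduces $4\,\mathrm{Var}_\sigma(G)$; the tighter justification via the extension characterization of the QFI is the one genuinely nontrivial ingredient. Everything else is bookkeeping: once $\max_\rho F_Q(\rho)=(\lambda_{\max}-\lambda_{\min})^2$ and $\max_{\sigma\in\free}F_Q(\sigma)\le s^*$ are in hand, the strict inequality in the hypothesis closes the argument by exhibiting $\rho^\star$.
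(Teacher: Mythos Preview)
Your proposal is correct and follows essentially the same approach as the paper: both proofs use the feasible point $X_{AB}=\sigma\otimes\sigma$ together with the variance bound $F_Q(\sigma)\le 4\,\mathrm{Var}_\sigma(G)$ to establish $\max_{\sigma\in\free}F_Q(\sigma)\le s^*$, and then invoke the standard fact that the global QFI maximum equals $(\lambda_{\max}-\lambda_{\min})^2$. Your version is slightly more explicit in identifying the optimal probe and expanding $\delta G^2$, and your aside about the extension characterization of the QFI is a nice conceptual gloss, but the core argument is the same.
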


\begin{proof}
One may immediately verify that the objective function $2\Tr[(G_A \otimes \openone_B - \openone_A \otimes G_B )^2 X_{AB}]$ is linear and that the feasible set is convex, so $s^*$ is the solution to a convex optimization problem.

To prove the statement, we just need to show that $s^*$ upper bounds $\max_{\sigma \in  \free} F_Q(\sigma)$. In general, for any generator $G$ the maximum achievable QFI can be verified to be $(\lambda_{\max}-\lambda_{\min})^2$ \cite{Giovannetti2006}, so if $(\lambda_{\max}-\lambda_{\min})^2 > s^* \geq \max_{\sigma \in  \free} F_Q(\sigma)$, we necessarily have $N_Q(\rho) > 0$.

To see that $s^*$ in indeed an upper bound, we use the fact that $F_Q(\sigma) \leq 4 \Delta^2_\sigma G$ where $\Delta^2_\sigma G$ is the variance of $G$ given the state $\sigma$. We then observe that $X_{AB} = \sigma \otimes \sigma$ where $\sigma \in \free$ is a feasible solution. Finally, we observe that $2\Tr[(G_A \otimes \openone_B - \openone_A \otimes G_B )^2 \sigma \otimes \sigma] = 4  \Delta^2_\sigma G$ so we must have $s^* \geq \max_{\sigma \in  \free} F_Q(\sigma)$ as required.
\end{proof}

The convex optimization in Theorem~\ref{thm::GCriterion} provides a direct method of testing of whether the particular parameter estimation problem generated by $G$ will benefit from a given quantum resource. Note that this is a sufficient condition, so failure of the test does not necessarily imply the resource is not useful for this encoding. We also highlight that the criterion is based on the QFI\hyp based quantity, $N_Q$, but it also applies to the CFI case as well, since if $N_Q(\rho) > 0$, then we are guaranteed some measurement $M$ for which $N_C(\rho \mid M) > N_Q(\rho) > 0$. A similar sufficient condition can be obtained for non\hyp unitary encoding channels, in terms of their unitary dilation.

We illustrate Theorem~\ref{thm::GCriterion} with a simple worked example. Consider a qubit system, with the free set $\free =  \{ \ketbra{0} \}$ being a trivial set with only a single element. Let $G = \sigma_z$, the Pauli matrix in the $z$ direction. In this case, the feasible set only has one state $X_{AB} = \ketbra{0}\otimes \ketbra{0}$, from which we can verify $s^* = 0$. Since $(\lambda_{\max}- \lambda_{\min})^2 = 4$, from Theorem~\ref{thm::GCriterion}, there must exist some state $\rho \not{\in} \free$ such that $N_C(\rho) > 0$. One can repeat the same argument for the case where $\free =  \{ \ketbra{1} \}$. Since the FI is convex~\cite{Sidhu2020}, $\rho$ must outperform any convex mixture of $\ketbra{0}$ and $\ketbra{1}$, i.e.\ any incoherent quantum state. This is one way to verify that quantum coherence is a useful nonclassical resource for the unitary encoding generated by $G = \sigma_z$.\vspace{2ex}

\noindent{\it Identifying nonclassical operations ---} Thus far, we have considered the use of the FI to identify general resources in quantum states. Recall that every quantum resource theory is also accompanied by some set of free quantum operations $\mathcal{O}$. Just as any resourceful state can be defined as a state that is not in the set of free states $\free$, we can similarly define a resourceful operation as any quantum channel not in the set $\mathcal{O}$~\cite{liu_2019-1,liu_2020}. It turns out that the FI can also universally distinguish operations with and without a given resource.

\begin{theorem} \label{thm::nonclassOp}
	For any set of free operations $\mathcal{O}$ and quantum map $\Xi \not \in  \mathcal{O}$, there exists a quantum trajectory $\rho_\theta$ on an extended Hilbert space such that the map $\openone \otimes \Xi$ satisfies
	\begin{align}
	F_C[\openone \otimes \Xi (\rho_\theta ) \mid M] > \max_{\Omega \in  \free} F_C[\openone \otimes \Omega (\rho_\theta ) \mid M]  
	\end{align} for some $\theta$.
\end{theorem}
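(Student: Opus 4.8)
The plan is to lift the state\hyp level argument of Theorem~\ref{thm::existence} to the level of operations by passing to the Choi picture and then replaying the construction of that proof with the roles of the tested state and the tested channel interchanged. Throughout I take $\mathcal{O}$ to be convex and closed, a mild and standard assumption that is in any case implicit in writing a maximum over $\Omega$.

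First I would fix a reference system $R$ isomorphic to the input of $\Xi$, a maximally entangled state $\phi^+$ between $R$ and the input of $\Xi$, and assign to every channel $\Lambda$ its Choi state $J_\Lambda := (\openone_R \otimes \Lambda)(\proj{\phi^+})$. The set $\free' := \{ J_\Omega \mid \Omega \in \mathcal{O} \}$ is then compact and convex, and because the Choi correspondence is injective, $J_\Xi \in \free'$ would force $\Xi = \Omega \in \mathcal{O}$; hence $\Xi \notin \mathcal{O}$ is equivalent to $J_\Xi \notin \free'$. A separating hyperplane then supplies a Hermitian $W$ with $\Tr[W J_\Xi] > \max_{\Omega \in \mathcal{O}} \Tr[W J_\Omega]$. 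Writing $W = a \Pi + b\openone$ with $0 \le \Pi \le \openone$ and $a > 0$ (possible whenever $W \not\propto \openone$, which separation guarantees) and using $\Tr[J_\Lambda] = 1$ for trace\hyp preserving $\Lambda$, this becomes a channel discrimination advantage,
\begin{align*}
q_{\succ}(\Xi) > \max_{\Omega \in \mathcal{O}} q_{\succ}(\Omega), \qquad q_{\succ}(\Lambda) := \Tr[\Pi\,(\openone_R \otimes \Lambda)(\proj{\phi^+})],
\end{align*}
that is, a fixed bipartite probe and POVM element $\Pi$ on which $\openone \otimes \Xi$ triggers the outcome more often than any free operation.

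Next I would mirror the construction in the proof of Theorem~\ref{thm::existence}, building a $\theta$\hyp controlled probe family $\rho_\theta$ on a further enlarged space carrying a classical flag prepared as $\theta\proj{0} + (1-\theta)\proj{1}$ in a register untouched by $\openone \otimes \Xi$. In the flag\hyp$\ket{0}$ branch the probe holds $\phi^+$, fed through $\openone \otimes \Xi$ and later interrogated by $\Pi$; in the flag\hyp$\ket{1}$ branch the channel input is a fixed junk state that is discarded, while the reference side additionally carries a classical coin $c\proj{0} + (1-c)\proj{1}$. A single fixed, flag\hyp conditioned POVM $M$ --- applying $\Pi$ in the signal branch and reading the coin in the reference branch --- produces the binary statistics
\begin{align*}
P(0 \mid \theta) = \theta\, q_{\succ}(\Lambda) + (1-\theta)\, c
\end{align*}
for whichever $\Lambda \in \{\Xi\} \cup \mathcal{O}$ is inserted, and I would set $c = \min_{\Omega \in \mathcal{O}} q_{\succ}(\Omega)$, which is realizable because $c$ is tuned solely through the coin and enters the reference term independently of $\Lambda$. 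Evaluating Eq.~\ref{eq::Thm1FC} at $\theta = 0$ gives $F_C(\Lambda \mid M) = [q_{\succ}(\Lambda) - c]^2 / [c(1-c)]$, which is strictly increasing in $q_{\succ}(\Lambda)$ on the range $q_{\succ}(\Lambda) \ge c$; since every free $\Omega$ obeys $c \le q_{\succ}(\Omega) \le \max_\Omega q_{\succ}(\Omega) < q_{\succ}(\Xi)$, the claimed strict inequality follows, with the degenerate case $c = 0$ treated exactly as the $p_{\succ}(\sigma_0) = 0$ case of Theorem~\ref{thm::existence}.

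The step I expect to be the main obstacle is realizing a channel\hyp independent reference term $c$ inside a single fixed apparatus. In Theorem~\ref{thm::existence} the discrimination channels $A_0, A_1$ belonged to the fixed setup and could be applied to a reference state $\sigma_0$; here the channel is precisely the object under test, so the reference branch cannot simply run a fixed free operation $\Omega_0$. The resolution is to route the reference branch around $\openone \otimes \Xi$ entirely --- discarding the channel output and deciding the outcome from the untouched reference register --- so that its contribution is a constant fixed by the probe preparation rather than by the inserted channel. Verifying that the resulting $M$ is a legitimate, $\theta$\hyp independent POVM and that $\rho_\theta$ is a valid trajectory for $\theta$ near $0$ is then routine.
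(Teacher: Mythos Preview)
Your argument is correct and structurally the same as the paper's: both reduce the claim to a channel discrimination advantage and then run the $\theta$\hyp mixing trick of Theorem~\ref{thm::existence} to convert that advantage into a CFI gap. The differences are in implementation details. First, you derive the discrimination advantage yourself via the Choi isomorphism and a separating hyperplane, obtaining a single bipartite probe $\phi^+$ and a single POVM element $\Pi$; the paper instead invokes the ensemble\hyp based result of Ref.~\cite{takagi_2019-2} directly, giving a family $\{p_i,\rho_i\}$ together with POVMs $\{\pi_i\}$ and carrying a classical index register in $\rho_\theta$. Second, for the flag\hyp $\ket{1}$ reference branch you route around the channel and read a prepared coin of bias $c=\min_{\Omega}q_{\succ}(\Omega)$, whereas the paper actually does apply a fixed free operation $\Omega_0$ there --- your stated concern that this ``cannot'' be done is misplaced: since the $\rho_i$ are a known, fixed collection labelled by the classical index $i$, the generalized measurement $M$ can reprepare $\rho_i$ from that label, feed it through $\Omega_0$, and apply $\pi_i$ (equivalently, simply output the precomputed constant $p_{\succ}(\Omega_0)$), discarding the tested channel's output in that branch. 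Both mechanisms yield the same binary statistics $P(0\mid\theta)=\theta\cdot(\text{tested value})+(1-\theta)\cdot(\text{fixed constant})$ and the same CFI computation at $\theta=0$. Your Choi\hyp based route is slightly more self\hyp contained and uses a simpler probe; the paper's route is shorter because it outsources the discrimination step.
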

A full discussion of the proof can be found in~\cite{CompanionPaper}. Theorem~\ref{thm::nonclassOp} demonstrates that the FI plays a foundational role not just in the study of nonclassicality in states, but also in the study of the resources of quantum channels.\vspace{2ex}

\noindent{\it Conclusion ---} 
Many quantities traditionally used to study the nonclassical features of quantum mechanics are typically relevant only when considering specific notions of nonclassicality. However, quantum advantages in different tasks rely on a broad range of quantum phenomena, motivating the study of physical quantities which can be used to identify \textit{all} reasonable quantum resources.
To this end, we showed that two Fisher information (FI)\hyp based quantities $N_C$ and $N_Q$, defined through the classical and quantum variants of the FI, respectively, are examples of such universally relevant operational quantities. This implies that every quantum resource can provide a quantum advantage in some parameter estimation problem. In this sense, any feasible notion of nonclassicality objectively always provides a quantum advantage in metrology, although it remains subjective as to whether such applications are relevant to the interests of an experimentalist. We also highlight that, while the focus of this work is on identifying (detecting) general quantum resources, it is also possible to construct resource measures --- also called resource monotones --- using $N_C$ and $N_Q$. This is discussed in greater detail in the Supplemental Material~\cite{CompanionPaper}.

We then provided a lower bound on the maximum extent of this quantum advantage in terms of the generalized robustness measure~\cite{ vidal_1999, takagi_2019, takagi_2019-2}, which also universally identifies resourceful states. For the case of estimation problems with binary outcomes, we also provided an upper bound, in terms of a related quantity called the standard robustness. The special case of unitary encodings was also considered, where we provided a simple criterion to test whether a given quantum resource provides an advantage for a unitary encoding generated by a Hermitian operator $G$. Finally, we showed that not only does the FI universally identify resources of quantum states, it also universally witnesses resourceful quantum operations in every resource theory. These results solidify the central role that the FI plays in the study of quantum resources.\vspace{2ex}

\noindent\textit{Acknowledgments ---} K.C.T.\ and B.R.\ acknowledge support by the NTU Presidential Postdoctoral Fellowship program funded by Nanyang Technological University. V.N.\ acknowledges support from the Lee Kuan Yew Endowment Fund (Postdoctoral Fellowship).

\bibliography{fisher_prl}

\onecolumngrid
\vspace{2\baselineskip}
\twocolumngrid

\section*{Supplemental Material}

In this Supplemental Material, we provide detailed technical proofs of several theorems in the main text, as well as additional discussions comparing classical and quantum Fisher information, as well as how the Fisher information may be used to construct nonclassicality measures in general resource theories.

\subsection{Proof of Theorem~\ref{thm::NCbounds}}

We employ the (generalized) robustness measure $R$~\cite{vidal_1999}, given by
\begin{equation}\begin{aligned}
R(\rho) = \inf \left\{ \lambda \;\left|\; \frac{\rho + \lambda \omega}{1+\lambda} \in \free \right.\right\},
\end{aligned}\end{equation}
where the optimization is over all quantum states $\omega$.

The result below is a slight extension of the statement of Thm.~\ref{thm::NCbounds} from the main text.

\begin{theorem*} 
	In any resource theory, there exists a parameter estimation task $\Phi_\theta$ such that 
	\begin{equation}\begin{aligned}
	\R(\rho)^2 \leq N_C(\rho \mid \Phi_\theta, M) \leq \R(\rho)^2 + 2 \R(\rho),
	\end{aligned}\end{equation}
	In particular, $N_C(\rho \mid \Phi_\theta , M) > 0$ iff $\rho \notin \free$, and $N_C^{\max}(\rho) \geq \R(\rho)^2$ is a computable lower bound for any resource.
	
		Furthermore, if the resource theory is affine --- that is, the set $\free$ is the intersection of some affine subspace with the set of density matrices --- then $ N_C(\rho \mid \Phi_\theta, M) =\R(\rho)^2 $.
\end{theorem*}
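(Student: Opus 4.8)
The plan is to reduce the claim to the single-parameter Fisher-information task already built in the proof of Theorem~\ref{thm::existence}, and to read the bounds off the generalized robustness through its operational characterization. First I would invoke the result of Ref.~\cite{takagi_2019-2} in quantitative form: for any $\rho\notin\free$ there is a (sub)channel-discrimination game whose success probability is an affine functional $p_{\text{succ}}(\tau)=\Tr[E\tau]$ of the input, with $0\le E\le\openone$, such that $p_{\text{succ}}(\rho)=[1+\R(\rho)]\,p^*$ where $p^*\coloneqq\max_{\sigma\in\free}p_{\text{succ}}(\sigma)$. Feeding this game into the construction of Theorem~\ref{thm::existence} yields, near $\theta=0$,
\[
F_C(\tau\mid M)=\frac{\big(p_{\text{succ}}(\tau)-p_0\big)^2}{p_0(1-p_0)},\qquad p_0\coloneqq p_{\text{succ}}(\sigma_0),
\]
where the auxiliary state $\sigma_0$ — hence the offset $p_0$ — remains free to choose. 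Since $\lambda_{\max}(E)\ge\Tr[E\rho]=p_{\text{succ}}(\rho)$, any target $p_0\in(p^*,p_{\text{succ}}(\rho))$ is realizable by a suitable $\sigma_0$.

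Next I would rescale $\theta$ so that $\max_{\sigma\in\free}F_C(\sigma\mid M)=1$; the common factor $p_0(1-p_0)$ cancels in the ratio, turning the advantage into
\[
N_C=F_C(\rho\mid M)-1=\frac{\big(p_{\text{succ}}(\rho)-p_0\big)^2}{\max_{\sigma\in\free}\big(p_{\text{succ}}(\sigma)-p_0\big)^2}-1 .
\]
As $p_0\to0^+$ the normalized ratio tends to $[p_{\text{succ}}(\rho)/p^*]^2=(1+\R)^2$, giving $N_C\to\R^2+2\R$, whereas as $p_0\uparrow p_{\text{succ}}(\rho)$ the numerator vanishes and $N_C\to-1$. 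Continuity of $N_C$ in $p_0$ then lets me apply the intermediate value theorem to pick an offset realizing any prescribed value in $[\R^2,\R^2+2\R]$, which proves the bracketing. Because $\R(\rho)=0$ iff $\rho\in\free$, this also recovers $N_C>0\iff\rho\notin\free$; and as the normalized task meets the constraint defining $N_C^{\max}$, it delivers $N_C^{\max}(\rho)\ge\R(\rho)^2$, computable as a convex program.

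For the affine refinement I would use that when $\free$ is the intersection of an affine subspace with the density matrices, the optimal generalized-robustness witness can be taken \emph{constant} on $\free$, i.e.\ $p_{\text{succ}}(\sigma)=\Tr[E\sigma]=c$ for all $\sigma\in\free$ (with $p_{\text{succ}}(\rho)=(1+\R)c$); cf.\ the structure of affine theories in Refs.~\cite{gour_2017,regula_2019}. The free set then collapses to a single success-probability value, so $\max_{\sigma\in\free}\big(p_{\text{succ}}(\sigma)-p_0\big)^2=(c-p_0)^2$ \emph{exactly}, and $N_C=\big[((1+\R)c-p_0)/(c-p_0)\big]^2-1$ is an explicit function of the lone offset. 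Choosing $p_0\in(c,(1+\R)c)$ so that $[((1+\R)c-p_0)/(c-p_0)]^2=1+\R^2$ — a root that exists since the left side decreases continuously from $+\infty$ to $0$ on this interval — gives $N_C=\R(\rho)^2$.

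I expect the main obstacle to be controlling $\max_{\sigma\in\free}F_C(\sigma\mid M)$ in the \emph{non-affine} case. The free success probabilities fill an interval $[p_{\min},p^*]$, and the parabola $F_C$ is maximized at whichever endpoint lies farther from $p_0$, so the free-set maximum depends piecewise on $p_0$ and need not track $\rho$'s advantage tightly; this is exactly why one obtains only the window $[\R^2,\R^2+2\R]$ rather than an equality. The affine hypothesis is precisely what removes the difficulty by flattening the witness so that $\free$ contributes a single value, and establishing that flatness is the key structural input for the sharp identity.
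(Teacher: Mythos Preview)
Your route uses the same two building blocks as the paper --- the quantitative Takagi--Regula characterization of the robustness via a discrimination game, and the Fisher-information construction from Theorem~\ref{thm::existence} --- but the execution is different. The paper makes one canonical choice, $\sigma_0=\arg\min_{\sigma\in\free}\Tr(W\sigma)$ for the optimal robustness witness $W^\star$, and builds the rescaling \emph{into} the encoding channel so that one obtains the closed form $F_C(\tau\mid\Phi_\theta,M)=[\Tr(W^\star\tau)-\Tr(W^\star\sigma_0)]^2$; since this choice guarantees $\Tr(W^\star\sigma)-\Tr(W^\star\sigma_0)\ge0$ for all free $\sigma$, the free maximum is simply $[1-\Tr(W^\star\sigma_0)]^2$, yielding $N_C=\R(\rho)\,[\R(\rho)+2(1-\Tr(W^\star\sigma_0))]$. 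The bracketing then drops out of $\Tr(W^\star\sigma_0)\in[0,1]$, and the affine refinement is immediate because one may take $W^\star$ constant on $\free$, forcing $\Tr(W^\star\sigma_0)=1$ and hence $N_C=\R(\rho)^2$. No continuity or intermediate-value reasoning is needed.

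Your IVT strategy does reach the conclusion, but it is more roundabout and carries a loose end: you invoke the limit $p_0\to0^+$, yet $p_0=p_{\succ}(\sigma_0)=\Tr[E\sigma_0]$ can only be pushed down to $\lambda_{\min}(E)$, which need not vanish for the optimal witness. The IVT argument still goes through because at $p_0=\lambda_{\min}(E)$ one has $N_C\ge\R^2+2\R$ anyway (the ratio $[(1+\R)p^*-p_0]/(p^*-p_0)$ is $\ge1+\R$ whenever $p_0\ge0$), but you do not verify this. Also note that, as you yourself observe, the free-set maximum of $(p_{\succ}(\sigma)-p_0)^2$ switches between the endpoints $p_{\min}$ and $p^*$ as $p_0$ moves; your IVT therefore relies on continuity of this piecewise expression, which is fine but should be stated. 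Finally, your separate affine argument is unnecessary: your own general IVT already manufactures a task with $N_C=\R^2$; what the paper's affine statement actually asserts is that its \emph{canonical} task (not a task found by tuning $p_0$) achieves equality, which follows from the witness being flat on $\free$.
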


\begin{proof}
We will use the fact that the robustness can be written as~\cite{brandao_2005,regula_2018}
\begin{equation}\begin{aligned}\label{eq:rob_dual}
  \R (\rho) = \sup \left\{\left. \Tr(W\rho) - 1 \;\right|\; W \geq 0,\; \Tr(W \sigma) \leq 1 \; \forall \sigma \in \free \right\},
\end{aligned}\end{equation}
and let $W \neq 0$ be any feasible solution to the above optimisation. As shown in~\cite[Thm.\ 2]{takagi_2019-2}, there exists a channel discrimination task with $d$ channels $\{\Lambda_i\}$ (in fact, unitary channels) and a POVM $N = \{N_i\}$ such that the probability of successfully discriminating the channels with uniform prior probability for any input state $\tau$ is
\begin{equation}\begin{aligned}
  p_{\succ}(\tau) \coloneqq \sum_i \frac{1}{d} \Tr(\Lambda_i(\tau) N_i) = \frac{\Tr(W\tau)}{\Tr W}.
\end{aligned}\end{equation}
We then define $\sigma_0 \coloneqq \operatorname{arg min}_{\sigma \in \free} \Tr(W\sigma)$.

We will first assume that $p_\succ(\sigma_0) >0$, so that $\Tr(W\sigma_0) > 0$. Noticing that $\Tr W \geq \Tr(W \sigma_0)$, we then define the channel $\Phi_\theta$ as
\begin{widetext}
\begin{equation}\begin{aligned}
  \mathcal{A}_\theta (\tau) \coloneqq& \left(\sum_i \frac{1}{d} \Lambda_i (\tau) \otimes \proj{i} \right) \otimes \sqrt{\Tr (W \sigma_0) (\Tr W - \Tr (W \sigma_0))}\, \theta \proj{0}\\
  +&  \left(\sum_i \frac{1}{d} \Lambda_i (\sigma_0) \otimes \proj{i} \right) \otimes (1- \sqrt{\Tr (W \sigma_0) (\Tr W - \Tr (W \sigma_0))} \, \theta) \proj{1},\\
    \Phi_\theta (\tau) \coloneqq&   \Tr\left( \mathcal{A}_\theta(\tau) \left[ \sum_i  N_i \otimes \proj{i} \otimes \openone \right] \right) \proj{0} +  \Tr\left( \mathcal{A}_\theta(\tau) \left[ \sum_{i \neq j}  N_i \otimes \proj{j} \otimes  \openone \right] \right) \proj{1}
\end{aligned}\end{equation}
\end{widetext}
where, without loss of generality, we have taken $\theta$ small enough so that $\mathcal{A}_\theta$ and $\Phi_\theta$ are valid CPTP maps. Performing the measurement $M = \{ \proj{0}, \proj{1} \}$, we then obtain the statistics
\begin{equation}\begin{aligned}
  P(0 \mid \theta) &= \theta \sqrt{\Tr (W \sigma_0) (\Tr W - \Tr (W \sigma_0))}\, p_{\succ}(\tau) \\
  & + \left(1-\theta \sqrt{\Tr (W \sigma_0) (\Tr W - \Tr (W \sigma_0))}\right) p_{\succ}(\sigma_0),\\
  P(1 \mid \theta) &=  1 - P(0|\theta).
\end{aligned}\end{equation}
We can then use Eq.~\eqref{def::classFish} to compute the classical information of the statistics. One may verify that 
	\begin{align}
	F_C(\tau \mid \Phi_\theta, M) = \frac{\left(\pdv{}{\theta} P(0 \mid \theta)\right)^2}{P(0 \mid \theta)(1-P(0 \mid \theta))}.
	\end{align}
Evaluating near the vicinity of $\theta = 0$, the resulting Fisher information is then
\begin{equation}\begin{aligned}
  &F_C(\tau \mid \Phi_\theta, M) \\
  &= \frac{\left[\Tr (W \sigma_0) (\Tr W - \Tr (W \sigma_0))\right][p_{\succ}(\tau) - p_{\succ}(\sigma_0)]^2}{p_{\succ}(\sigma_0)(1-p_{\succ}(\sigma_0))}\\
  &= \frac{\left[\Tr (W \sigma_0) (\Tr W - \Tr (W \sigma_0))\right]\left[\frac{\Tr(W\tau) - \Tr(W\sigma_0)}{\Tr W}\right]^2}{\frac{\Tr(W\sigma_0)}{\Tr W}\left(1 - \frac{\Tr(W\sigma_0)}{\Tr W}\right)}\\
  &= [\Tr(W \tau) - \Tr(W\sigma_0)]^2.
\end{aligned}\end{equation}

In the case that $p_\succ(\sigma_0)= 0$, we can make an analogous choice of channels $\mathcal{A}_\theta, \Phi_\theta$ but replacing the constant $\sqrt{\Tr (W \sigma_0) (\Tr W - \Tr (W \sigma_0))}$ with $\Tr W$. This then gives the statistics
\begin{equation}\begin{aligned}
  P(0 \mid \theta) &= \theta \Tr W\, p_{\succ}(\tau)\\
  P(1 \mid \theta) &=  1 - \theta \Tr W \, p_{\succ}(\tau).
\end{aligned}\end{equation}
Taking $\theta$ to 0, this gives
\begin{equation}\begin{aligned}
	F_C(\tau \mid \Phi_\theta, M) &= [ \Tr W p_{\succ}(\tau)]^2\\
	&= [\Tr(W \tau)]^2\\
	&= [\Tr(W \tau) - \Tr(W\sigma_0)]^2
\end{aligned}\end{equation}
analogously to the case $p_\succ(\sigma_0) >0$.

Noticing that $F_C(\sigma \mid \Phi_\theta, M) \leq 1$ for any $\sigma \in \free$, we have that $\Phi_\theta \in \mathcal{P}$, and so
\begin{equation}\begin{aligned}
	N_C^{\max}(\rho) &\geq N_C(\rho | \Phi_\theta, M) \\
	&= [\Tr(W \rho) - \Tr(W\sigma_0)]^2 - \max_{\sigma \in \free} [\Tr(W \sigma) - \Tr(W\sigma_0)]^2.
\end{aligned}\end{equation}
In the degenerate case $\R(\rho) = \infty$, $\Tr(W\rho)$ can be made arbitrarily large, so we see that $N_C^{\max}(\rho) = \infty$. Assuming then that $\R(\rho) < \infty$, there must exist an optimal choice $W^\star$ such that $\Tr(W^\star \rho) - 1 = \R(\rho)$; for such a choice, there must then exist $\sigma \in \free$ such that $\Tr(W^\star\sigma) = 1$, since otherwise we could rescale $W^\star$ to obtain a higher value of the robustness, contradicting its optimality. With this choice of $W$, we obtain
\begin{equation}\begin{aligned}
  F_C(\rho \mid \Phi_\theta, M) &= [\R(\rho) + 1 - \Tr(W^\star\sigma_0)]^2,\\
  \max_{\sigma \in \free} F_C(\sigma \,|\, \Phi_\theta, M) &= [1 - \Tr(W^\star\sigma_0) ]^2,
\end{aligned}\end{equation}
where we note that the terms inside the outer brackets are always non-negative. This gives
\begin{equation}\begin{aligned}
	&N_C(\rho \mid \Phi_\theta, M)\\
	&= [\R(\rho) + 1 - \Tr(W^\star\sigma_0)]^2 - [1 - \Tr(W^\star\sigma_0) ]^2\\
	&= \left[\R(\rho) + 1 - \Tr(W^\star\sigma_0) - 1 + \Tr(W^\star\sigma_0)\right]\\
	&\quad\times \left[\R(\rho) + 1 - \Tr(W^\star\sigma_0) + 1 - \Tr(W^\star\sigma_0)\right]\\
	&= \R(\rho) \left[ \R(\rho) + 2(1-\Tr(W^\star\sigma_0)) \right].
\end{aligned}\end{equation}
Since $\Tr(W^\star \sigma_0) \in [0,1]$, we thus see that
\begin{equation}\begin{aligned}
	\R(\rho)^2 \leq N_C(\rho \mid \Phi_\theta, M) \leq \R(\rho)[ \R(\rho) + 2] .
\end{aligned}\end{equation}

In the special case of affine resource theories, one can always choose an optimal $W^\star$ satisfying $\Tr(W^\star\sigma) = 1 \; \forall \sigma \in \free$~\cite{regula_2019}, which gives $N_C(\rho | \Phi_\theta, M) = R(\rho)^2$.

\end{proof}

\begin{remark}
We have seen in the proof that the lower bound $N_C(\rho \mid \Phi_\theta, M) \geq \R(\rho)^2$ can be tight. Note that there also exist cases in which the upper bound is tight, in the sense that $N_C(\rho \mid \Phi_\theta, M) = \R(\rho)^2 + 2 \R(\rho)$. This includes the resource theory of entanglement where, for a $d \times d$-dimensional pure state $\rho$, the optimal witness $W^\star$ is simply $W^\star = d \proj{\Psi^+}$ with $\ket{\Psi^+} = \frac{1}{\sqrt{d}} \sum_{i} \ket{ii}$~\cite{harrow_2003,steiner_2003}, which satisfies $\Tr(W^\star \sigma_0) = 0$.

We also remark that the degenerate case $\R(\rho) = \infty$, which we have considered for completeness, can only occur when the support of $\rho$ is not contained in the support of any state in $\free$. In particular, as long as there exists at least one full-rank state $\sigma \in \free$, we always have $\R(\rho) < \infty$.
\end{remark}

\subsection{Proof of Theorem~\ref{thm::ubound}}

Here, we employ the standard robustness~\cite{vidal_1999}
\begin{equation}\begin{aligned}
R_S(\rho) = \inf \left\{ \lambda \;\left|\; \frac{\rho + \lambda \sigma}{1+\lambda} \in \free,\; \sigma \in \free \right.\right\}.
\end{aligned}\end{equation}

\begin{theorem*}
For any parameter estimation task with encoding channel family $\Phi_\theta$ and two\hyp outcome measurement $M\equiv(P,\openone-P)$, let $r\coloneqq F_C(\rho \mid  M)<\infty$ and
    \begin{equation}
        \omega:=\left|\pdv{\Tr{\left[P\,\Phi_\theta(\rho)\right]}}{\theta}\right|_{\theta=0}.
    \end{equation}
    Then,
\begin{align}\label{ubounda}
    N_C(\rho \mid M)&\le r-\frac{\left[R_S(\rho)+1\right]^{-2}\omega^2}{\max\limits_{\tau\in\free}\Tr{\left[P\,\Phi_0(\tau)\right]}\left(1-\Tr{\left[P\,\Phi_0(\tau)\right]}\right)}\nonumber\\
    &\le r-\frac{\left[R_S(\rho)+1\right]^{-2}\omega^2}{\max\limits_{\tau\in\free}\Tr{\left[P\,\Phi_0(\tau)\right]}\left(1-\min\limits_{\eta\in\free}\Tr{\left[P\,\Phi_0(\eta)\right]}\right)},
\end{align}
where $R_S(\rho)$ is the standard robustness of $\rho$ with respect to $\free$.
\end{theorem*}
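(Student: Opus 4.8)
The plan is to recast the claim as a lower bound on the free-state Fisher information and then to exhibit a free state attaining it. Since $M$ has two outcomes, writing $p_\mu(\theta):=\Tr[P\Phi_\theta(\mu)]$ we have $F_C(\mu\mid M)=p_\mu'(0)^2/[\,p_\mu(0)(1-p_\mu(0))\,]$ and $N_C(\rho\mid M)=r-\max_{\sigma\in\free}F_C(\sigma\mid M)$. Setting $G:=\max_{\tau\in\free}p_\tau(0)(1-p_\tau(0))$, the first line is equivalent to producing a free state with $F_C\ge[R_S(\rho)+1]^{-2}\omega^2/G$, and the looser second line then follows at once from $G\le\max_{\tau\in\free}p_\tau(0)\,(1-\min_{\eta\in\free}p_\eta(0))$. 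I would therefore concentrate entirely on the first inequality.

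For the core estimate I would use the standard robustness directly. Let $\sigma^\star\in\free$ be optimal and put $\tau^\star:=(\rho+R_S(\rho)\sigma^\star)/(1+R_S(\rho))\in\free$, giving the operator inequality $\tau^\star\ge\rho/(1+R_S(\rho))$. Feeding this through the completely positive map $\Phi_\theta$ and the positive effects $P$ and $\openone-P$ yields the pointwise sandwich
\begin{equation*}
\frac{p_\rho(\theta)}{1+R_S(\rho)}\le p_{\tau^\star}(\theta)\le\frac{R_S(\rho)+p_\rho(\theta)}{1+R_S(\rho)},
\end{equation*}
valid for all $\theta$. The natural candidate free state is $\tau^\star$, whose variance factor obeys $p_{\tau^\star}(0)(1-p_{\tau^\star}(0))\le G$, so everything reduces to lower bounding its signal $|p_{\tau^\star}'(0)|$.

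By linearity of $\mu\mapsto p_\mu(\theta)$ one has $p_{\tau^\star}'(0)=[\,p_\rho'(0)+R_S(\rho)\,p_{\sigma^\star}'(0)\,]/(1+R_S(\rho))$, so the target $F_C(\tau^\star\mid M)\ge[R_S(\rho)+1]^{-2}\omega^2/G$ is equivalent to $|p_\rho'(0)+R_S(\rho)\,p_{\sigma^\star}'(0)|\ge\omega=|p_\rho'(0)|$, i.e.\ to the correction $R_S(\rho)\,p_{\sigma^\star}'(0)$ not pointing against $p_\rho'(0)$. I would try to control this through the nonnegative slack $h(\theta):=p_{\tau^\star}(\theta)-p_\rho(\theta)/(1+R_S(\rho))=\tfrac{R_S(\rho)}{1+R_S(\rho)}p_{\sigma^\star}(\theta)\ge0$: when $p_{\sigma^\star}(0)=0$ the point $\theta=0$ minimizes $h$, forcing $h'(0)=0$ and hence $|p_{\tau^\star}'(0)|=\omega/(1+R_S(\rho))$ exactly, which closes the argument.

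The main obstacle is the generic case $p_{\sigma^\star}(0)>0$, where positivity of $h$ at an interior point constrains neither $h'(0)$ nor the sign of $p_{\sigma^\star}'(0)$. A crude triangle-inequality estimate then only delivers $|p_{\tau^\star}'(0)|\ge\omega/(1+2R_S(\rho))$, and replacing $\tau^\star$ by $\sigma^\star$ or by any convex mixture of free states does not beat this, since the free signal is linear and extremized at the decomposition endpoints; for instance a $z$-rotation used to probe $\ket{+}$ against the segment of incoherent qubit states saturates the weaker constant $[2R_S(\rho)+1]^{-2}$. Hence the decisive step, which I expect to demand the most care, is to exclude this sign cancellation for the \emph{optimal} robustness decomposition --- either by showing an optimal $\sigma^\star$ can always be chosen so that $h$ is tangent to zero at $\theta=0$, or by mining the completely positive structure of $\Phi_\theta$ for a constraint on $p_{\sigma^\star}'(0)$ going beyond the pointwise positivity used above.
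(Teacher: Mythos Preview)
Your proposal correctly isolates the structure of the problem but, as you yourself acknowledge, does not close the argument: the sign cancellation between $p_\rho'(0)$ and $R_S(\rho)\,p_{\sigma^\star}'(0)$ for generic $p_{\sigma^\star}(0)>0$ is left unresolved, and neither of your suggested fixes (forcing tangency of $h$ at $\theta=0$, or mining complete positivity further) is actually carried out. The triangle-inequality estimate you reach, $|p_{\tau^\star}'(0)|\ge\omega/(1+2R_S(\rho))$, is strictly weaker than what the theorem asserts. (As an aside, your proposed saturation example is ill-posed: the standard robustness of $\ket{+}$ with respect to incoherent qubit states is infinite, since no convex combination of a coherent state with diagonal states can be diagonal.)

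The paper's proof takes a different route that sidesteps the sign issue by not committing to a single free state from the robustness decomposition. It first decouples signal from variance via
\[
\max_{\sigma\in\free}F_C(\sigma\mid M)\ \ge\ \frac{\max_{\sigma\in\free}\dot p_\sigma^2}{\max_{\tau\in\free}p_\tau(1-p_\tau)},
\]
and then lower-bounds the numerator by reinterpreting $|\dot p_\tau|$ operationally: for small $\delta$ it is, up to a constant, the Helstrom gain $p_{\mathrm{gain}}(\tau)$ in discriminating the two channels $M\circ\Phi_0$ and $M\circ\Phi_\delta$ using probe $\tau$. The desired inequality $\max_{\sigma\in\free}|\dot p_\sigma|\ge\omega/(1+R_S(\rho))$ is then read off from the channel-discrimination bound $p_{\mathrm{gain}}(\rho)\le(1+R_S(\rho))\max_{\sigma\in\free}p_{\mathrm{gain}}(\sigma)$ of Takagi--Regula (their Thm.~7). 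That external result already packages the free-state optimization that you attempt to perform by hand through the explicit pair $(\tau^\star,\sigma^\star)$; without invoking or reproving it, the route via a fixed robustness decomposition appears stuck at the $(1+2R_S)$ constant you found.
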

\begin{proof}
For a given arbitrary two\hyp outcome estimation task $\left[\Phi_\theta, M\equiv(P, \openone-P)\right]$ and probe state $\tau$, define
\begin{equation}\label{defptau}
p_\tau\left(\theta\right):=\Tr\left[P\,\Phi_\theta(\tau)\right].
\end{equation}
as a function of the real parameter $\theta$. We then observe that
\begin{equation}\label{fcpt}
F_C(\tau \mid \Phi_\theta , M )=\frac{\dot p^2_\tau\left(0\right)}{p_\tau(0)\left[1-p_\tau(0)\right]},
\end{equation}
where a dot above denotes the derivative wrt $\theta$. For brevity, we will omit the argument $0$ for $p_\tau$ and $\dot p_\tau$ in the remainder, explicating only nonzero arguments where they occur. We first note that
\begin{align}\label{numden}
\max_{\sigma \in  \free}F_C(\sigma \mid \Phi_\theta , M )&=\max_{\sigma \in  \free}\frac{\dot p^2_\sigma}{p_\sigma\left(1-p_\sigma\right)}\nonumber\\
&\ge\frac{\max\limits_{\sigma \in  \free}\dot p^2_\sigma}{\max\limits_{\tau \in  \free}p_\tau\left(1-p_\tau\right)}
\end{align}

We will now lower\hyp bound the numerator. To this end, consider the task of discriminating between the two channels (occurring with equal probabilities) $\Lambda_0:=M\circ\Phi_0$ and $\Lambda_1:=M\circ\Phi_\delta$ for some small $\delta>0$, where $M$ is co\hyp opted to denote the channel $M(X):=\Tr\left(PX\right)\proj0+\left[1-\Tr\left(PX\right)\right]\proj1$. Using a probe state $\tau$, the game reduces to that of distinguishing the states
$$p_\tau\left(k\delta\right)\proj0+\left[1-p_\tau\left(k\delta\right)\right]\proj1,$$
$k\in\{0,1\}$. By the Haelstrom bound, the ``bias'' or ``gain'' that the best distinguishing strategy achieves over a random guess (what \cite{takagi_2019} denotes by variants of the notation $p_{\mathrm{gain}}$) is proportional to $\delta\left|\dot p_\tau\right|$
for small $\delta$. Now we invoke \cite[Thm.\ 7]{takagi_2019}, which puts an upper bound on the gain of a given resource state relative to that achieved by the free states:
\begin{equation}\label{gbound}
\frac{p_{\mathrm{gain}}(\rho)}{\max\limits_{\sigma\in\free}p_{\mathrm{gain}}(\sigma)}\le R_S(\rho)+1.
\end{equation}
By the above observation relating $p_{\mathrm{gain}}$ to $\dot p_\tau$, it follows that
\begin{equation}\label{eqnbound}
\max_{\sigma \in  \free}\dot p^2_\sigma\ge\frac{\dot p^2_\rho}{\left[R_S(\rho)+1\right]^2}=\frac{\omega^2}{\left[R_S(\rho)+1\right]^2}.
\end{equation} The result of the theorem in the main text then follows by noting that $\max\limits_{\tau \in  \free}p_\tau\left(1-p_\tau\right)\le0.25$, together with Eq.~\eqref{numden}.
\end{proof}

\begin{remark}[$\omega$ as an energy scale]
To understand the significance of the parameter $\omega$, let us consider an encoding channel family given explicitly through its Stinespring dilation, as
\begin{equation}
    \Phi_\theta\left(\rho\right)=\Tr_C\left[e^{-i\theta H_{AB}}\left(\rho_A\otimes\proj\psi_B\right)e^{i\theta H_{AB}}\right],
\end{equation}
where $H_{AB}$ is Hermitian, $C$ is some subsystem of $AB$, and $\ket\psi$ can be made $\theta$\hyp independent as above without loss of generality. For a small $\theta$, we have
\begin{equation}
    \Phi_\theta\left(\rho\right)\approx\tilde\rho+i\theta G_\rho,
\end{equation}
where $\tilde\rho\equiv\Tr_C\left(\rho_A\otimes\proj\psi_B\right)$ is $\theta$\hyp independent, and
\begin{equation}
    G_\rho=\Tr_C\left[\rho_A\otimes\proj\psi_B,H_{AB}\right]
\end{equation}
generates the action of $\Phi_\theta$ in the neighbourhood of $\theta=0$. We thus have
    \begin{equation}
        \omega=\left|\pdv{\Tr{\left[P\,\Phi_\theta(\rho)\right]}}{\theta}\right|_{\theta=0}=\left|\Tr{\left(iP\,G_\rho\right)}\right|.
    \end{equation}
Since $H$ is the Hamiltonian generating the encoding operation on $\rho$, we see that $\omega$ is a measure of the order of magnitude of the energy used in executing the encoding on the support of $\rho$ measured by $P$. In particular, the same encoding channel family can be implemented at a finer resolution by using a Hamiltonian $H'=cH$ (corresponding to $\omega'=c\omega$) for $c>1$.
\end{remark}

\begin{remark}[more general bounds]
Can we make the bound even more task\hyp independent than in Theorem~\ref{thm::ubound}? Unfortunately, this is unlikely: while convex\hyp geometric properties such as robustness measures help upper\hyp bound linear functionals like $p_\tau$ for a resource state in terms of the maximum value attained in the convex free set $\free$, the nonlinear $p_\tau(1-p_\tau)$ appears to have no such task\hyp independent bounds. A possible recourse might be to use convex\hyp geometric bounds on each of the two factors in the denominator in the last expression in \eqref{ubounda}; however, the resulting bound turns out to be worse than the $0.25$ of the main theorem. On the other hand, by eschewing generality and considering tasks where the performance of $\rho$ necessarily ``costs'' at least $\omega$, we prevent $p_\rho(1-p_\rho)$ from getting vanishingly small and helping $\rho$ attain arbitrarily high FI at a fixed energy cost.
\end{remark}

\subsection{Proof of Theorem~\ref{thm::nonclassOp}}

\begin{theorem*}
	For any set of free operations $\mathcal{O}$ and quantum map $\Xi \not \in  \mathcal{O}$, there exists a quantum trajectory $\rho_\theta$ on an extended Hilbert space such that the map $\openone \otimes \Xi$ satisfies
	\begin{align}
	F_C[\openone \otimes \Xi (\rho_\theta ) \mid M] > \max_{\Omega \in  \free} F_C[\openone \otimes \Omega (\rho_\theta ) \mid M]  
	\end{align} for some $\theta$.
\end{theorem*}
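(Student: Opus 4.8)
The plan is to reduce this dynamical statement to the state-level result of Theorem~\ref{thm::existence} via the Choi--Jamio\l{}kowski isomorphism, so that the \emph{fixed} map $\openone\otimes\Xi$ plays the role of the encoding while the parameter $\theta$ is carried entirely by the input trajectory. Assuming, as is standard for resource theories of operations, that $\mathcal{O}$ is closed and convex, I would first associate to every channel $\Lambda$ its Choi state $J(\Lambda)\coloneqq(\openone\otimes\Lambda)(\proj{\Phi^+})$, where $\ket{\Phi^+}=\frac1{\sqrt d}\sum_i\ket{ii}$ is maximally entangled between a reference $R$ and the input $A$ of $\Xi$. The set $\mathcal{J}_{\mathcal{O}}\coloneqq\{J(\Omega)\mid\Omega\in\mathcal{O}\}$ then inherits closedness and convexity from $\mathcal{O}$, and since the Choi map is a linear bijection, $\Xi\notin\mathcal{O}$ is equivalent to $J(\Xi)\notin\mathcal{J}_{\mathcal{O}}$. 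In other words, $J(\Xi)$ is a resourceful \emph{state} relative to the free set $\mathcal{J}_{\mathcal{O}}$.

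Second, I would invoke the result of Ref.~\cite{takagi_2019-2} exactly as in the proof of Theorem~\ref{thm::existence}, but applied to the state $J(\Xi)$ and free set $\mathcal{J}_{\mathcal{O}}$: there exist channels $\{A_0,A_1\}$ and a POVM $\{\pi_0,\pi_1\}$ such that the linear functional $p_{\succ}(\cdot)=\tfrac12\Tr[A_0(\cdot)\pi_0]+\tfrac12\Tr[A_1(\cdot)\pi_1]$ obeys $p_{\succ}(J(\Xi))>\max_{\Omega\in\mathcal{O}}p_{\succ}(J(\Omega))$. The crucial observation is that $p_{\succ}(J(\Lambda))$ is linear in $\Lambda$ and can be read off \emph{after} a single application of $\openone\otimes\Xi$: writing $A_k^\dagger$ for the Heisenberg-picture adjoint, $p_{\succ}(J(\Lambda))=\Tr[J(\Lambda)\,\tfrac12(A_0^\dagger(\pi_0)+A_1^\dagger(\pi_1))]$, so all of the discrimination machinery can be pushed into the final POVM rather than inserted around the channel.

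Third, mirroring the circuit in Theorem~\ref{thm::existence}, I would define the trajectory on the enlarged space carrying a flag qubit $F$,
\begin{equation*}
\rho_\theta=\bigl(\theta\proj{0}_F+(1-\theta)\proj{1}_F\bigr)\otimes\proj{\Phi^+}_{RA},
\end{equation*}
together with the two-outcome measurement $M=\{M_{\mathrm{acc}},\openone-M_{\mathrm{acc}}\}$,
\begin{equation*}
M_{\mathrm{acc}}=\proj{0}_F\otimes\tfrac12\bigl(A_0^\dagger(\pi_0)+A_1^\dagger(\pi_1)\bigr)+\proj{1}_F\otimes p_0\openone,
\end{equation*}
where $p_0\coloneqq\min_{\Omega\in\mathcal{O}}p_{\succ}(J(\Omega))$; unitality of $A_k^\dagger$ guarantees $0\le M_{\mathrm{acc}}\le\openone$, so $M$ is a legitimate POVM. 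Applying $\openone\otimes\Xi$ (acting only on $A$) and using trace preservation on the flag-$1$ branch --- which forces that branch to contribute the constant $p_0$ \emph{independently} of the operation applied --- yields $P(\mathrm{acc}\mid\theta)=\theta\,p_{\succ}(J(\Xi))+(1-\theta)\,p_0$, of exactly the affine form appearing in Theorem~\ref{thm::existence}. Evaluating the two-outcome Fisher information \eqref{eq::Thm1FC} at $\theta=0$ then gives $F_C[\openone\otimes\Xi(\rho_\theta)\mid M]=[p_{\succ}(J(\Xi))-p_0]^2/[p_0(1-p_0)]$, and the identical expression with $\Xi$ replaced by any free $\Omega$; since $p_{\succ}(J(\Xi))>p_{\succ}(J(\Omega))\ge p_0$ for every $\Omega\in\mathcal{O}$, the strictly monotone numerator delivers the claimed inequality, with the degenerate case $p_0=0$ dispatched exactly as in Theorem~\ref{thm::existence}.

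The main obstacle is conceptual rather than computational: unlike the state theorem, the channel here is frozen to be $\openone\otimes\Xi$, so the parameter $\theta$ and \emph{all} of the discrimination pre- and post-processing must be relocated into the input trajectory and the final measurement without modifying the map itself. The trace-preservation trick for the constant flag branch --- ensuring one branch's acceptance probability is genuinely operation-independent --- is what makes this relocation possible and is the key step to get right; verifying that $M_{\mathrm{acc}}$ is a valid effect through unitality of $A_k^\dagger$, and that $\mathcal{J}_{\mathcal{O}}$ inherits closedness and convexity from $\mathcal{O}$, are the remaining points requiring care.
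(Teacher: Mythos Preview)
Your proof is correct and takes a genuinely different route from the paper's. The paper invokes the channel-level discrimination result of Ref.~\cite{takagi_2019-2} directly, obtaining an ensemble $\{p_i,\rho_i\}$ and POVMs $\{\pi_i\}$ with $p_{\succ}(\Xi)=\sum_ip_i\Tr[\openone\otimes\Xi(\rho_i)\,\pi_i]>\max_{\Omega\in\mathcal O}p_{\succ}(\Omega)$; the trajectory then carries an extra classical index register, $\rho_\theta=\sum_i p_i\,\rho_i\otimes\proj{i}\otimes\bigl[\theta\proj0+(1-\theta)\proj1\bigr]$, and the flag-$1$ branch is handled by discarding and repreparing the fixed state $\sum_ip_i(\openone\otimes\Omega_0)(\rho_i)\otimes\proj{i}$ (with $\Omega_0$ the free minimizer), after which the measurement is read off from the index. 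You instead collapse everything to the state-level Theorem~\ref{thm::existence} via the Choi isomorphism: the single input $\proj{\Phi^+}$ replaces the ensemble, the Heisenberg adjoints $A_k^\dagger(\pi_k)$ push the discrimination channels into the final POVM, and the flag-$1$ constant is realized by the trivial effect $p_0\openone$ through trace preservation. Your approach buys a cleaner trajectory (no index register) and, importantly, makes explicit the conceptual point that the paper's write-up leaves somewhat opaque --- namely that the measurement $M$ must be genuinely independent of the channel under test, so that the same $M$ can be used when comparing $\Xi$ against every $\Omega\in\mathcal O$. The paper's construction achieves this too (once one notices that the flag-$1$ repreparation is a fixed map that can be absorbed into $M$ because the index register retains $i$), but your formulation makes the mechanism transparent. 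Both proofs require $\mathcal O$ to be closed and convex, which you state explicitly and the paper needs implicitly through its citation of Ref.~\cite{takagi_2019-2}.
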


\begin{proof}
	We will use the fact that for any set of free operations $\mathcal{O}$ and quantum channel $\Xi \not \in \mathcal{O}$, there exists some collection of states and probabilities $\{p_i, \rho_i \}$, as well as POVMs $\{\pi_j\}$ such that $p_{\text{succ}}(\Xi) > \max_{\Omega \in \mathcal{O}} p_{\text{succ}}( \Omega)$, where $p_{\text{succ}}(\Xi) \coloneqq \sum_i p_i\Tr[\openone \otimes \Xi (\rho_i) \pi_i]$(Ref.~\cite{takagi_2019-2}).
	
	Consider the following input state:
	\begin{align}
	\rho_\theta   \coloneqq \sum_i \rho_i \otimes p_i \ketbra{i} \otimes [\theta \ketbra{0}+(1-\theta) \ketbra{1}].
	\end{align} For some $\Omega_0 \in \mathcal{O}$, we consider the following series of quantum maps that depends on the map $\Xi$:
	
	\begin{align}
	&\Lambda_1  (\rho_\theta )    \\
	&\coloneqq \theta \sum_i \openone \otimes \Xi(\rho_i) \otimes p_i \ketbra{i} \otimes \ketbra{0} \\
	&\quad + (1-\theta) \sum_i \openone \otimes \Omega_0 (\rho_i)  \otimes p_i \ketbra{i} \otimes  \ketbra{1}.
	\end{align} Here, the map $\Lambda_1$ performs the map $\openone \otimes \Xi$ on $\rho_i$ when the last qubit is in the state $\ket{0}$ and performs some other operations $\openone \otimes \Omega_0$ where $\Omega_0 \in \mathcal{O}$ when the last qubit is in the state $\ket{1}$. We then consider the quantum map
	
	\begin{align}
	&\Lambda_2 \circ \Lambda_1  (\rho_\theta )  \\
	&  \coloneqq \Tr \{ \Lambda_1 (\rho_\theta )  \pi_i \otimes \ketbra{i} \otimes \openone \} \ketbra{0} \\
	&\quad + \Tr \{ \Lambda_1 (\rho_\theta )   \pi_i \otimes (\openone -  \ketbra{i}) \otimes \openone \} \ketbra{1}.
	\end{align}
	
We then perform the projection $P_0 \coloneqq \ketbra{0}$ and $P_1 \coloneqq \ketbra{1}$ to obtain the statistics 
\begin{align}
P(0 \mid \theta) = \theta p_{\text{succ}}(\Xi) + (1-\theta) p_{\text{succ}}(\Omega_0)
\end{align} and 
\begin{align}
P(1 \mid \theta) = 1- P(0 \mid \theta).
\end{align}

We can absorb the maps $\Lambda_1$, $\Lambda_2$ and measurements $P_0,P_1$ and represent it as a single generalized measurement $M = \{ M_i \}$. One may then verify that 
\begin{align}
F_C[\openone \otimes \Xi (\rho_\theta ) \mid M] = \frac{(\pdv{P(0 \mid \theta)}{\theta})^2}{P(0 \mid \theta)(1-P(0 \mid \theta))}.
\end{align}

Similar to Theorem~\ref{thm::existence}, evaluating near the vicinity of $\theta = 0$, the resulting Fisher information is 
\begin{align}
F_C[\openone \otimes \Xi (\rho_\theta ) \mid M] = \frac{[p_{\text{succ}}(\Xi)- p_{\text{succ}}(\Omega_0)]^2}{p_{\text{succ}}(\Omega_0)(1-p_{\text{succ}}(\Omega_0))}.
\end{align}

The quantum map $\Omega_0$ is not yet specified. We choose it such that it satisfies $p_{\text{succ}}(\Omega_0) = \min_{\Omega \in  \mathcal{O}} p_{\text{succ}}(\Omega)$. This means that $F_C[\openone \otimes \Xi (\rho_\theta ) \mid M]$ is a monotonically increasing function of $p_{\text{succ}}(\Xi)$. Together with the fact that  $p_{\text{succ}}(\Xi) > p_{\text{succ}}(\Omega)$ for every $\Omega \in  \mathcal{O}$, we must have 
\begin{align}
F_C[\openone \otimes \Xi (\rho_\theta ) \mid M] > \max_{\Omega \in  \free} F_C[\openone \otimes \Omega (\rho_\theta ) \mid M]
\end{align} in the vicinity $\theta = 0$. This shows the existence of at least one quantum trajectory $\rho_\theta$ where $F_C[\openone \otimes \Xi (\rho_\theta ) \mid M] > \max_{\Omega \in  \free} F_C[\openone \otimes \Omega (\rho_\theta ) \mid M] $  for some value of $\theta$.
\end{proof}

\subsection{Comparison between classical and quantum Fisher information for resource identification}

In the main text, we make a distinction between classical and quantum Fisher information and defined corresponding  resource witnesses $N_C$ and $N_Q$. In was also discussed in the main text that the classical Fisher information is at least as good as the quantum Fisher information at identifying resource states. This is because for any parametrized channel $\Phi_\theta$, in the sense that it is always possible to find a measurement $M^\star$ such that $N_C (\rho \mid M^\star) \geq N_Q(\rho)$. In fact, there exist situations where $N_C$ is a strictly better resource identifier than $N_Q$. This is summarized by the following statement:
\begin{proposition}
There exist resource theories and unitary encodings of the form $U_\theta = e^{-i\theta G}$ where $N_C(\rho \mid M)$ identifies a strictly greater number of nonclassical states than $N_Q(\rho)$.
\end{proposition}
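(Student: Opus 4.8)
The plan is to leverage the inclusion already established in the main text, namely that every state flagged by $N_Q$ is also flagged by $N_C$: for the measurement $M^\star$ that saturates the QFI of $\rho$ one has $N_C(\rho \mid M^\star) \ge N_Q(\rho)$, so $\{\rho : N_Q(\rho) > 0\} \subseteq \{\rho : \exists M,\, N_C(\rho \mid M) > 0\}$. It therefore suffices to exhibit a single resource theory, a single unitary generator $G$, and one resourceful state $\rho$ that is invisible to $N_Q$ but visible to $N_C$; strictness of the inclusion then follows immediately. I would construct such an example on a single qubit.

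Concretely, I would take $G = \sigma_z$, so that $U_\theta = e^{-i\theta\sigma_z}$ rotates Bloch vectors about the $z$-axis, and choose the free set to be the segment $\free = \{\tfrac12(\openone + s\,\sigma_x) : s \in [-c,c]\}$ for a fixed $0 < c < 1$ --- a closed, convex set of states whose Bloch vectors all lie along the $x$-axis, with the identity channel serving as a (trivial) compatible free operation. For the resourceful probe I would take $\rho = \tfrac12(\openone + c\,\sigma_y)$, whose Bloch vector $c\hat y$ is perpendicular to the free axis. The first step is to verify that $N_Q(\rho) = 0$: under rotation about $z$ the QFI of a qubit with Bloch vector $\vec n$ equals $4(n_x^2 + n_y^2)$ (equivalently $4\Delta^2_\rho G$, as in Thm.~\ref{thm::GCriterion}, since the second SLD term $\propto(\vec n\cdot\dot{\vec n})^2$ vanishes here), so $F_Q(\rho) = 4c^2$, while $\max_{\sigma\in\free} F_Q(\sigma) = 4c^2$ is attained at the endpoints $s = \pm c$. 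Hence $\rho$ carries resource yet is undetected by $N_Q$.

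The second step is to fix the measurement $M = \{\proj{+},\proj{-}\}$, with $\ket{\pm}=(\ket0\pm\ket1)/\sqrt2$, and evaluate the binary-outcome CFI of Eq.~\eqref{eq::Thm1FC} near $\theta = 0$. Because every free state already lies on the measurement axis, its Bloch vector rotates perpendicularly to that axis and the outcome probability is stationary in $\theta$, giving $F_C(\sigma \mid M) = 0$ for all $\sigma \in \free$; by contrast $\rho$'s Bloch vector rotates directly into the measured axis, yielding $P(+\mid\theta)=\tfrac12-\tfrac{c}{2}\sin 2\theta$ and hence $F_C(\rho \mid M) = 4c^2$. Thus $N_C(\rho \mid M) = 4c^2 > 0$, so $\rho$ is detected by $N_C$ but not by $N_Q$, establishing the strict inclusion and hence the proposition.

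The step I expect to be the main obstacle --- and the reason for the restriction $c < 1$ --- is controlling $\max_{\sigma\in\free} F_C(\sigma \mid M)$ so that it stays strictly below $F_C(\rho \mid M)$. Were the pure equatorial states $\ket{\pm}$ admitted into $\free$ (i.e.\ $c = 1$), the outcome probability for $\ket\pm$ under the $\sigma_x$ measurement would approach $1$ as $\theta \to 0$, and the resulting $0/0$ limit in Eq.~\eqref{eq::Thm1FC} resolves to the maximal value $F_C(\ket{\pm}\mid M) = 4$, wiping out the gap. Keeping the extremal free states strictly mixed removes this degeneracy: their CFI-optimal measurement becomes unique (perpendicular to their equatorial Bloch vector), so the $\sigma_x$ measurement is a genuine ``blind spot'' for the entire free set while remaining QFI-optimal for $\rho$. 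The remaining routine checks are that $\free$ is closed and convex and admits a compatible free-operation set, and one may note that the same mechanism can be embedded into more standard theories by enlarging $\free$ away from the measurement axis, so long as no added free state attains high CFI under $M$.
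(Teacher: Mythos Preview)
Your construction is correct and proves the proposition, but by a different route from the paper. The paper takes $\free$ to be the entire closed lower hemisphere of the Bloch sphere and argues geometrically that every great circle meets $\free$, so for \emph{every} generator $G$ the QFI-maximal pure states already lie in $\free$ and $N_Q$ is identically useless; it then picks $G=\sigma_y/2$ together with a non-projective POVM to obtain $N_C(\ketbra{0}\mid M)=1/6$. Your example is more economical: a one-dimensional free set of strictly mixed states along the $x$-axis, $G=\sigma_z$, and the projective $\sigma_x$ measurement. What your choice buys is transparency --- the outcome probability for every free state is stationary at $\theta=0$, so $\max_{\sigma\in\free}F_C(\sigma\mid M)=0$ without any optimization --- while the paper's construction buys the stronger conclusion that $N_Q$ fails across \emph{all} unitary encodings simultaneously, not only the one exhibited (the proposition as stated requires only the latter). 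One small inaccuracy worth fixing: your parenthetical identifying $4(n_x^2+n_y^2)$ with $4\Delta^2_\rho G$ is not right for mixed qubits, since $4\Delta^2_\rho\sigma_z = 4(1-n_z^2) = 4(n_x^2+n_y^2)+4(1-|\vec n|^2)$ and the variance bound $F_Q\le 4\Delta^2 G$ is strict when $|\vec n|<1$; the formula you actually use, $F_Q=|\dot{\vec n}|^2$ for unitary qubit dynamics, is correct and is all the argument needs.
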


This can be shown by considering an explicit example. As mentioned above, recall that single parameter estimation problem, the set of nonclassical states identified by $N_C$ is at least as large as the set identified by $N_Q$.
It therefore suffices to demonstrate that there is at least one plausible set of classical states $ \free$ where $N_Q(\rho)=0$ for every unitary encoding $U_\theta = e^{-i\theta G}$, but there exists  Hermitian generator $G$ and measurement $M$ where $N_C(\rho \mid M) >0$.
	
To do this, let us consider an artificial but plausible resource theory with free states $ \free$. Consider a qubit system where $ \free$ is some strict convex subset of the Bloch sphere.

We define the subset $ \free \coloneqq \mathrm{Conv} \left ( \{ \ket{\phi, \theta} \mid \theta \in [\pi/2,3\pi/2] \} \right)$, where $\ket{\phi,\theta} \coloneqq \cos(\theta/2)\ket{0}+e^{i\phi}\sin(\theta/2)\ket{1}$. This is just the lower hemisphere of the Bloch sphere, inclusive of the $x$-$y$ plane. Since this is the convex hull over a set of pure states, it is also a strict convex subset of the Bloch sphere, which makes it a plausible set of classical states to consider for some resource theory based on the assumptions made in the main text.	
	
Consider now the set of all possible unitary encodings $U_\theta = e^{-i\mu G}$. Since the set of Pauli matrices $\{\openone, \sigma_x, \sigma_y, \sigma_z \}$ spans the operator space, we see that every Hermitian generator $G$ can be written as $r_0 \openone +\vec{r}.\vec{\sigma}$, where $\vec{\sigma} = \sigma_x, \sigma_y, \sigma_z)$ is the vector of Pauli matrices. Since $e^{-i\mu r_0 \openone}$ just adds a global phase, it does not affect any measurement statistics, so we can just consider a Hermitian generator of the type $G = \vec{r}.\vec{\sigma}$. Without any loss in generality, we can also normalize $G$ such that $\abs{\vec{r}} =1$. This means that $G = \vec{r}.\vec{\sigma}$ is a Pauli matrix pointing in the direction $\vec{r}$. To summarize, this means that every unitary encoding on the Bloch sphere is equivalent to a rotation about an axis $\vec{r}$.

For $G = \sigma_z$, it is known that every pure state lying on the $x$-$y$ plane, i.e.\ every state of the form $(\ket{0}+e^{i\phi} \ket{1})/\sqrt{2}$, maximizes the quantum Fisher information. We see that these states lie on a greater circle on the Bloch sphere. The problem of finding $G = \vec{r}.\vec{\sigma}$ such that $N_Q(\rho) > 0$ for some $\rho$ therefore boils down to finding a greater circle on the Bloch sphere which does not intersect the set of classical states $ \free$. If the greater sphere intersects $ \free$, then there exists a classical state which maximizes the quantum Fisher information, which necessarily means that $N_Q(\rho) \ngtr 0$. However, we see that it is impossible to draw a greater circle without intersecting the lower hemisphere. $N_Q$ is therefore unable to identify any nonclassical state outside of $ \free$ over the set of unitary encodings.

Let us consider $G = \sigma_y/2$ and POVMs $M_0 = \ketbra{0}+\ketbra{1}/2$ and $M_1 = \ketbra{1}/2$. First, we note that $G$ generates a rotation about the $y$-axis. Any component along the $y$ axis therefore does not contain any information about this rotation. It therefore suffices to optimize the classical Fisher information over the $z$-$x$ plane. One can then verify that the resulting classical Fisher information for any input state $\ket{\phi=0,\theta} \coloneqq \cos(\theta/2)\ket{0}+\sin(\theta/2)\ket{1}$ is given by  $F_C(\ketbra{\phi,\theta} \mid M) = 2\cos^2(\theta/2)/(3+\cos\theta)$. Over the top hemisphere, this is maximized by the state $\ket{0}$, and $F_C(\ketbra{0} \mid M) = 1/2$. Over the bottom hemisphere, this is maximized by $\ket{+} \coloneqq (\ket{0}+\ket{1})/\sqrt{2}$, for which $F_C(\ketbra{+} \mid M) = 1/3$. We therefore have $N_C(\ketbra{0} \mid M) = 1/2-1/3=1/6$. This shows that there exists some $G$ and measurement $M$ such that $N_C(\rho \mid M) > 0$ for some $\rho$. Thus demonstrates that $N_C$ identifies strictly more nonclassical states than $N_Q$ over all possible unitary encodings and resource theories.

\subsection{General resource measures based on Fisher information}

The main text focused on the use of Fisher information to identify resource states in general resource theories. Here, we provide addition discussion on how Fisher information based measures can be used to measure the resourcefulness of a quantum state, again, within the context of general resource theories.

A fundamental property of a resource quantifiers is that they should be convex functions of state. If a resource quantifier is not convex, we face a problematic situation where preparing a simple statistical mixture of states, which is a classical process, can increase the amount of quantum resource in the system. In the following proposition we establish that the quantities $N_C$ and $N_Q$ are convex functions, so they avoid this complication.

\begin{proposition} \label{prop::convexity}
 Both $N_C(\rho \mid M)$ and $N_Q (\rho)$ are convex functions of state, i.e.\ $\sum_i p_i N_C(\rho_i \mid M) \geq N_C(\sum_i p_i \rho_i \mid M)$ and $\sum_i p_i N_Q(\rho_i) \geq N_C(\sum_i p_i \rho_i )$, where $p_i \geq 0$ and  $\sum_i p_i =1$.
\end{proposition}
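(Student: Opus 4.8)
The plan is to reduce both claims to the convexity of the Fisher information functionals themselves. Observe that in $N_C(\rho\mid M)=F_C(\rho\mid M)-\max_{\sigma\in\free}F_C(\sigma\mid M)$ the subtracted term is a constant independent of $\rho$, and likewise the offset $\max_{\sigma\in\free}F_Q(\sigma)$ in $N_Q$. Since adding a $\rho$-independent constant does not affect convexity, it suffices to show that $\rho\mapsto F_C(\rho\mid M)$ (for fixed $\Phi_\theta$ and $M$) and $\rho\mapsto F_Q(\rho)$ are each convex in $\rho$.

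For the classical case, I would first note that the output statistics $P(i\mid\theta)=\Tr[\Phi_\theta(\rho)M_i]$ and their derivatives $\pdv{P(i\mid\theta)}{\theta}$ are both \emph{linear} in $\rho$, since $\Phi_\theta$ and the trace are linear. Writing the CFI in the equivalent form $F_C(\rho\mid M)=\sum_i \left(\pdv{P(i\mid\theta)}{\theta}\right)^2\big/P(i\mid\theta)$, the key step is the joint convexity of the perspective function $(x,y)\mapsto x^2/y$ on the half-space $\{y>0\}$, applied termwise with $x=\pdv{P(i\mid\theta)}{\theta}$ and $y=P(i\mid\theta)$. Because each pair $(x,y)$ depends linearly on $\rho$, and a convex function precomposed with a linear map remains convex, every summand is convex in $\rho$; summing over $i$ preserves convexity. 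This establishes convexity of $F_C(\cdot\mid M)$, and hence of $N_C(\cdot\mid M)$.

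For the quantum case, I would invoke the operational definition of the QFI as the CFI maximized over all measurements, $F_Q(\rho)=\sup_M F_C(\rho\mid M)$. By the previous paragraph, for each fixed $M$ the map $\rho\mapsto F_C(\rho\mid M)$ is convex, and a pointwise supremum of convex functions is again convex. Therefore $F_Q$, and with it $N_Q$, is convex, recovering the convexity of the FI already used in the main text.

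I expect the main obstacle to be the classical step, specifically justifying the termwise joint convexity of $x^2/y$ and confirming that it composes correctly with the affine dependence on $\rho$; the quantum step then follows essentially for free from the supremum-of-convex-functions principle. A minor point requiring care is that the derivatives $\pdv{P(i\mid\theta)}{\theta}$ be evaluated at the same reference value of $\theta$ used in defining the FI, so that the linear map $\rho\mapsto\left(\pdv{P(i\mid\theta)}{\theta},P(i\mid\theta)\right)$ is well defined at that point.
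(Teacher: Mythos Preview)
Your proposal is correct and follows essentially the same approach as the paper: both reduce the convexity of $N_C$ and $N_Q$ to that of $F_C$ and $F_Q$ by observing that the subtracted maxima are $\rho$-independent constants, and both then argue via the linearity of the map $\rho\mapsto P(i\mid\theta)$. The only difference is one of detail: the paper simply cites the convexity of $F_C$ over classical distributions and of $F_Q$ over states, whereas you spell these out --- via the joint convexity of the perspective map $(x,y)\mapsto x^2/y$ for $F_C$, and via the supremum-of-convex-functions argument $F_Q=\sup_M F_C(\cdot\mid M)$ for $F_Q$ --- making your version slightly more self-contained.
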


\begin{proof}
	This property follows from the convexity of $F_C$ and $F_Q$~\cite{Tan2019, Sidhu2020}.
	
	We first consider $N_C$. From the definition $N_C (\rho \mid M) \coloneqq F_C(\rho \mid M) - \max_{\sigma \in  \free}F_C(\sigma \mid M)$, we see that the second term $\max_{\sigma \in  \free}F_C(\sigma \mid M)$ does not depend on the input state, so if $F_C(\rho \mid M)$ is convex, so is $N_C(\rho \mid M)$.

	Consider two states $\rho$ and $\sigma$, with statistics $P_\rho(i\mid \theta) = \Tr[\Phi_\theta(\rho)M_i]$  and $P_\sigma(i\mid \theta) = \Tr[\Phi_\theta(\sigma)M_i]$. The convex combination $p\rho + (1-p)\sigma $ leads to the statistics $P_{p\rho+(1-p))\sigma}(i\mid \theta) = pP_\rho(i\mid \theta)+(1-p)P_\sigma(i\mid \theta)$, where $p\in [0,1]$. We see that this is just the convex combination of two distributions $P_\rho(i\mid \theta)$ and $P_\sigma(i\mid \theta)$. It is known that the classical Fisher information $F_C$ is convex function over classical probability distributions so we have $p F_C(\rho \mid M) + (1-p)F_C(\sigma \mid M) \geq F_C(p\rho+(1-p)\sigma \mid M)$. This shows that $F_C(\rho \mid M)$ is convex so $N_C$ is also a convex function of state.
	
	Similarly, $F_Q(\rho)$ is known to be a convex function of state so $N_Q(\rho)$ must also be convex.
	
\end{proof}

In general resource theories, a function of state is considered a resource measure only when it monotonically decreases under some set of ``free" quantum operations $\freeop$. The exact nature of $\freeop$ varies depending on the resource theory under consideration, but they share one thing in common: they always map a classical state to another classical state, i.e.\ if $\Phi \in \freeop$, then $\Phi(\sigma) \in  \free$ for every $\sigma \in  \free$. We will discuss the construction $N^{\max}_C$ based on $N_C$ that naturally satisfies this monotonicity property. Recall that
\begin{align}
N^{\max}_C(\rho) \coloneqq \max_{\Phi_\theta \in \mathcal{P}, M}  N_C(\rho \mid \Phi_\theta , M ),
\end{align} where  $\mathcal{P}$ is the set of all parameter estimation problems satisfying $\max_{\sigma \in  \free} F_C(\sigma \mid \Phi_\theta, M ) \leq 1$. 


The following theorem shows that the above quantity is a resource monotone in general resource theories.

\begin{theorem*}
In any quantum resource theory with resource states $ \free$ and free operations $\freeop$, the quantity $N^{\max}_C(\rho)$ is a resource measure in the sense that it satisfies: (i) $N^{\max}_C(\rho) \geq 0$ for every state $\rho$ and $N^{\max}_C(\rho) > 0$ iff $\rho$ is nonclassical. (ii) $N^{\max}_C$ monotonically decreases under free operations, i.e.\ $N^{\max}_C[\Phi(\rho)] \leq N^{\max}_C(\rho)$ for any $\Phi \in \freeop$. (iii) $N^{\max}_C$ is a convex function of state, i.e.\ $N^{\max}_C[\rho] \leq \sum_i p_i N^{\max}_C(\rho_i)$. 
\end{theorem*}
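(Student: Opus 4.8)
The plan is to establish the three properties (i)--(iii) in turn, drawing on results already available in the excerpt: the convexity of $N_C$ (Proposition~\ref{prop::convexity}) and the lower bound $N^{\max}_C(\rho) \geq \R(\rho)^2$ of Theorem~\ref{thm::NCbounds}. Throughout, I would exploit the structural fact that $N^{\max}_C$ is a pointwise supremum of the single-task quantities $N_C(\cdot \mid \Phi_\theta, M)$, each of which has already been analyzed.

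For faithfulness (i), I would first note that $N^{\max}_C(\rho) \geq 0$ always holds, since the supremum defining $N^{\max}_C$ includes trivial tasks --- for instance a constant encoding channel whose output is $\theta$-independent --- for which $F_C \equiv 0$ on every state, giving $N_C = 0$ while trivially satisfying the normalization $\max_{\sigma \in \free} F_C(\sigma \mid \Phi_\theta, M) \leq 1$ required for membership in $\mathcal{P}$. For the characterization of strict positivity I would treat the two directions separately. If $\rho \in \free$, then for every admissible task $F_C(\rho \mid \Phi_\theta, M) \leq \max_{\sigma \in \free} F_C(\sigma \mid \Phi_\theta, M)$ by the definition of the maximum, so $N_C(\rho \mid \Phi_\theta, M) \leq 0$ for all tasks and hence $N^{\max}_C(\rho) \leq 0$; combined with nonnegativity this forces $N^{\max}_C(\rho) = 0$. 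Conversely, if $\rho \notin \free$ then $\R(\rho) > 0$, and Theorem~\ref{thm::NCbounds} gives $N^{\max}_C(\rho) \geq \R(\rho)^2 > 0$.

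Monotonicity (ii) is the crux of the argument. Given a free operation $\Phi \in \freeop$, the plan is, for any task $(\Psi_\theta, M) \in \mathcal{P}$ feasible for $\Phi(\rho)$, to construct the \emph{precomposed} task $(\Psi_\theta \circ \Phi, M)$ for $\rho$; since $\Phi$ is $\theta$-independent this is again a valid parameter-estimation problem, and the induced statistics satisfy $F_C(\tau \mid \Psi_\theta \circ \Phi, M) = F_C(\Phi(\tau) \mid \Psi_\theta, M)$ for every state $\tau$. The key point is that the normalization constraint is preserved: because $\Phi$ maps $\free$ into $\free$, we have $\{\Phi(\sigma) : \sigma \in \free\} \subseteq \free$, so $\max_{\sigma \in \free} F_C(\sigma \mid \Psi_\theta \circ \Phi, M) = \max_{\sigma \in \free} F_C(\Phi(\sigma) \mid \Psi_\theta, M) \leq \max_{\sigma' \in \free} F_C(\sigma' \mid \Psi_\theta, M) \leq 1$, whence $(\Psi_\theta \circ \Phi, M) \in \mathcal{P}$. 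This same containment shows that the subtracted free benchmark can only shrink under precomposition, yielding $N_C(\rho \mid \Psi_\theta \circ \Phi, M) \geq N_C(\Phi(\rho) \mid \Psi_\theta, M)$. Taking the supremum over all feasible $(\Psi_\theta, M)$ then gives $N^{\max}_C(\rho) \geq N^{\max}_C[\Phi(\rho)]$. I expect the main obstacle to be precisely this verification that precomposition keeps the task inside $\mathcal{P}$ while not increasing the free benchmark --- it is exactly here that the defining property of free operations, $\Phi(\free) \subseteq \free$, is indispensable.

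Finally, convexity (iii) follows because a pointwise supremum of convex functions is convex. For each fixed task $t = (\Phi_\theta, M) \in \mathcal{P}$, Proposition~\ref{prop::convexity} tells us that $\rho \mapsto N_C(\rho \mid t)$ is convex; hence for $\rho = \sum_i p_i \rho_i$ one has $N_C(\sum_i p_i \rho_i \mid t) \leq \sum_i p_i N_C(\rho_i \mid t) \leq \sum_i p_i N^{\max}_C(\rho_i)$, where the last step uses $N_C(\rho_i \mid t) \leq N^{\max}_C(\rho_i)$. Taking the supremum over $t$ on the left-hand side then gives $N^{\max}_C(\sum_i p_i \rho_i) \leq \sum_i p_i N^{\max}_C(\rho_i)$, completing the proof.
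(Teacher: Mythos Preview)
Your proof is correct and follows essentially the same approach as the paper's: for (ii) you use the precomposition trick $\Psi_\theta\mapsto\Psi_\theta\circ\Phi$ together with $\Phi(\free)\subseteq\free$, and for (iii) you use the convexity of $N_C$ from Proposition~\ref{prop::convexity} together with the fact that a supremum of convex functions is convex. The only minor differences are that for (i) you invoke the robustness lower bound of Theorem~\ref{thm::NCbounds} where the paper cites Theorem~\ref{thm::existence} directly, and that in (ii) you are more explicit than the paper about checking that the precomposed task stays in $\mathcal{P}$; both variants are fine.
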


\begin{proof}
	To prove the first property, we first see that from Theorem~\ref{thm::existence}, there must exist $\Phi_\theta$ and $M$ s.t. $N_C(\rho \mid \phi_\theta, M) > 0 $ if $\rho$ is nonclassical, hence $N_C^{\max}(\rho) > 0$ if $\rho$ is nonclassical. Conversely, if  $N_C^{\max}(\rho) > 0$  then $N_C^{\max}(\rho) > 0$ for some $\Phi_\theta$ and $M$, so the state must be nonclassical. We also see that if $\sigma \in  \free$ is classical, then the identity encoding $\Phi_\theta = \openone$ will reach $N_C^{\max}(\sigma) = 0$ which is the maximum for a classical state.
	 
	Second, we show that $N^{\max}_C(\rho)$ always monotonically decreases under quantum operations. Suppose $\Phi' \in \freeop$ is a free operation, and let $\Phi_\theta^*,M^*$ be optimal for the state $\Phi'(\rho)$ such that $N_C^{\max}[\Phi'(\rho)] = N_C[\Phi'(\rho) \mid \Phi_\theta^*,M^*]$. We have the following chain of inequalities:
	\begin{align}
	&N^{\max}_C(\rho) \\
	&= \max_{\Phi_\theta \in \mathcal{P}, M} \left[ F_C(\rho \mid \Phi_\theta, M )- \max_{\sigma \in  \free} F_C(\sigma \mid \Phi_\theta, M )\right ]\\
	&\geq F_C(\rho \mid \Phi^*_\theta\circ \Phi', M^* )- \max_{\sigma \in  \free} F_C(\sigma \mid \Phi^*_\theta\circ \Phi', M^* ) \\
	& = F_C[\Phi'(\rho) \mid \Phi^*_\theta, M^* ]- \max_{\sigma \in  \free} F_C[\Phi'(\sigma) \mid \Phi^*_\theta, M^* ] \\
	& \geq F_C[\Phi'(\rho) \mid \Phi^*_\theta, M^* ]- \max_{\sigma \in  \free} F_C[\sigma \mid \Phi^*_\theta, M^* ] \\
	&=N_C^{\max}[\Phi'(\rho)].
	\end{align} The first inequality above comes from the fact that  $\Phi_\theta^* \circ \Phi',M^*$ is in general suboptimal for the state $\rho$. The second inequality comes from the fact that $\Phi' \in \freeop$ is a free operation, so $\Phi'(\sigma) \in  \free$ for every $\sigma \in  \free$.
	
	Finally, we show that it is a convex function of state. Suppose $\rho = \sum_i p_i \rho_i$, and let $\Phi_\theta^*, M^*$ and $\Phi_{\theta,i}^*, M_i^*$ be the optimal channel encodings and measurements for $\rho$ and $\rho_i$ respectively.  We then have
	\begin{align}
	N^{\max}_C(\rho) &= N_C(\rho \mid \Phi_\theta^*, M^*) \\
	&\leq \sum_i p_i N_C(\rho_i \mid \Phi_\theta^*, M^*)\\
	&\leq \sum_i p_i N_C(\rho_i \mid \Phi_{\theta,i}^*, M_i^*)\\
	&= \sum_i p_i N^{\max}_C(\rho_i),
	\end{align} where the first inequality comes from the convexity of $N_C$ (Proposition~\ref{prop::convexity}) and the second inequality comes from the suboptimality of $\Phi_\theta^*, M^*$ for the state $\rho_i$.
	
\end{proof}

We have shown that $N_C^{\max}$ is a resource monotone. One may also consider the quantum Fisher information based measure 
\begin{align}
N^{\max}_Q(\rho) \coloneqq \max_{\Phi_\theta \in \mathcal{P}}  N_Q(\rho \mid \Phi_\theta  ),
\end{align} where  $\mathcal{P}$ is the set of all parameter estimation problems satisfying $\max_{\sigma \in  \free} F_Q(\sigma \mid \Phi_\theta) \leq 1$. It can also be shown that $N^{\max}_Q$ also monotonically decreases under free operations, following largely identical arguments as $N_C^{\max}$.

\end{document}